\numberwithin{equation}{section} \makeatletter
\renewcommand{\subsection}{\@startsection
{subsection}{2}{0mm}{\baselineskip}{-0.25cm}
{\normalfont\normalsize\bf}} \makeatother
\newtheorem{theorem}{Theorem}[section]
\newtheorem{lemma}[theorem]{Lemma}
\newtheorem{corollary}[theorem]{Corollary}
\newtheorem{definition}[theorem]{Definition}
\newtheorem{remark}[theorem]{Remark}
\newtheorem{proposition}[theorem]{Proposition}
\newtheorem{ass}[theorem]{Assumption}
\def \A {\mathcal A}
\def \D {\mathcal D}
\def \F {\mathcal F}
\def \L {\mathcal L}
\def \P {\mathbf P}
\def \R {\mathbb R}
\def \bF {\mathbb F}
\def \bN {\mathbb N}
\newcommand{\pd}[2]{\dfrac{\partial#1}{\partial#2}}
\newcommand{\pds}[2]{\dfrac{\partial{^{2}}#1}{\partial{#2}^{2}}}
\newcommand{\pdsm}[3]{\dfrac{\partial{^{2}}#1}{\partial{#2} \partial{#3}}}
\newcommand{\ud}{\mathrm d}
\newcommand{\ds}{\displaystyle}
\newcommand{\esp}[2][\mathbb E] {#1\left[#2\right]}
\newcommand{\aC}[1]{{\color{red} #1}}
\begin{document}
	
	\author[K.~Colaneri]{Katia Colaneri}\address{Katia Colaneri, Department of Economics and Finance, University of Rome Tor Vergata, Via Columbia 2, 00133 Rome, Italy.}\email{katia.colaneri@uniroma2.it}

	\author[A.~Cretarola]{Alessandra Cretarola}\address{Alessandra Cretarola, Department of Mathematics and Computer Science, University of Perugia, Via L. Vanvitelli, 1, 06123 Perugia, Italy.}\email{alessandra.cretarola@unipg.it}

	\author[B.~Salterini]{Benedetta Salterini}\address{Benedetta Salterini, Department of Mathematics and Computer Science (DIMAI), University of Firenze, Viale Morgagni, 67/a - 50134 Firenze, Italy.}\email{benedetta.salterini@unifi.it}

\title[]{Optimal investment and reinsurance
under exponential forward preferences}
%with zero-volatility forward utility}

	%\date{\today}
	
\begin{abstract} \begin{center}
We study the optimal investment and proportional reinsurance problem of an insurance company, whose investment preferences are described via a forward dynamic utility of exponential type in a stochastic factor model allowing for a possible dependence between the financial and insurance markets.
Specifically,
we assume that the asset price process dynamics and the claim arrival intensity are both affected by a common stochastic process and we account for a possible environmental contagion effect through the non-zero correlation parameter between the underlying Brownian motions driving the asset price process and the stochastic factor dynamics.
By stochastic control techniques, we construct a forward dynamic exponential utility, and we characterize the optimal investment and reinsurance strategy. Moreover, we investigate in detail the zero-volatility case and provide a comparison analysis with classical results in an analogous setting under backward utility preferences. We also discuss an extension of the conditional certainty equivalent. Finally, we perform a numerical analysis to highlight some features of the optimal strategy.
\end{center}\end{abstract}

\maketitle

\noindent {\bf Keywords}: Forward dynamic utility, optimal investment, optimal proportional reinsurance, stochastic factor-model, PDE characterization, martingale property.

\noindent {\bf 2010 MSC}: 60G55, 60J60, 91B30, 93E20.

%60G55 jump processes
%60J60 diffusion processes
%93E20 Optimal stochastic control
%91G10 Portfolio theory
%91B30 Risk theory, insurance
%

\noindent {\bf JEL Classification}: C61, G11, G22.

%\tableofcontents

%\author[N.~Cognome]{Nome Cognome}\address{Nome Cognome, Department of ,University , Address, Italy.}\email{to add}

%\date{\today}

\section{Introduction}
	
In this paper we investigate the problem of an insurance company which buys proportional reinsurance to protect against the risk of incoming claims and it is also allowed to invest part of its wealth in the market. However, we do not assume that the preferences of the insurance company are represented via traditional utility function but we consider
%an optimal investment-reinsurance problem in a stochastic factor model for an insurance company, whose preferences are described by
{\em forward dynamic utilities}. Intuitively, a forward dynamic utility represents individual preferences of an agent, possibly changing over time. The idea of using a utility to define preference for advancing the timing of future satisfaction or impatience is well known in the economic literature. An agent may dynamically adjust her preferences consistently with the information revealed over time and her impatience might be compensated for by the opportunities given to her, if they can be exploited in full according to her choices.

The notion of forward dynamic utility and the forward approach to optimal investment decision problems have been introduced and developed in \citep{MZ,musiela2008optimal,MZportchoice}, to overcome a few limitations of the traditional backward preference approach. Classical literature on portfolio optimization under backward utilities is based on the assumption that a utility is exogenously chosen to hold at a future date (no earlier than the end of the investment horizon) and employed to make investment decisions for today; this means that, when entering the market, agents
prescribe their risk profile at the horizon time and therefore cannot adapt it to changes in market conditions or update risk preferences.
In addition, the investment horizon is fixed, and the portfolio is derived with respect to this reference date. In the forward approach, instead of pre-specifying the utility function to be valid at some future time and identifying a trading horizon, the agent gives today's preferences and the utility is generated forward in time, that is, it naturally moves in the same direction of the market.
The agent chooses the optimal strategy to maximize the expected forward utility of her wealth, {\em at any future time} $t \ge 0$. The main consequence of this approach is that a forward utility allocates the
same value, in terms of utility of wealth, to the optimal investment over any
investment horizon. This yields to a stochastic control problem where the %solution can be obtained through the {\em dynamic programming principle}: the
value process is determined so that it enjoys the semimartingale property for any admissible strategy and it is a martingale for the optimal strategy. This reflects the natural idea that any sub-optimal strategy is under-performing, and that the expected performance of an optimal strategy at any future time is as good as today. In particular, the martingale property allows us to derive a PDE that characterizes the value function, and hence to %derive the Hamilton-Jacobi-Bellman (in short HJB) equation and hence, to
solve explicitly the problem.\\
In this paper we make use of the forward approach to study an optimal investment-reinsurance problem. This type of problems are widely studied in the actuarial literature under traditional backward utilities preferences, see e.g. \citep{IRPAU,liu2009optimal,gu2017optimal,bec,BRASCHMIDLI,cao2020optimal, ceci2022optimal}. So far, to the best of our knowledge, a first contribution of the forward dynamic approach in non-life insurance is given by \citep{colaneri2021optimal}, where a zero volatility
forward dynamic exponential utility in a regime-switching market model is considered. Nevertheless, we mention that the forward utility approach has been applied recently in the life-insurance framework in \citep{chong2019pricing}, where the evaluation problem of equity-linked life insurance contracts is investigated.\\
The main novelty of our paper is to consider an investment-reinsurance problem of an insurance company whose utility preferences are described by a non-zero volatility forward dynamic exponential utility, possibly depending on time and some stochastic factor, which is {\em correlated with the financial market}. This extends the classical non-life insurance setting by introducing a class of more flexible utility preferences, in a stochastic factor model.\\
The insurance company can invest its wealth in a market where a risky asset and a riskless asset are traded and can buy a proportional reinsurance. \\
Our model is set to encompass a desirable characteristic of hybrid markets that is {\em mutual dependence} between the financial and insurance markets via two different kinds of dependence. First, we assume that
the asset price process dynamics and the claim arrival intensity are both affected by a common stochastic process, which may represent for instance a social/environmental/cultural factor; second, we allow for a possible environmental contagion effect through the non-zero correlation parameter between the underlying Brownian motions driving the asset price process and the stochastic factor dynamics. \\
%Using the classical stochastic control approach based on the HJB equation,
By the martingale property and a PDE characterization, we obtain an analytic construction of a forward dynamic exponential utility, see Theorem \ref{familyfeu},
and provide the optimal reinsurance-investment strategy, see Proposition \ref{prop:optimal}. For the zero-volatility case, we discuss the differences and similarities with the optimal strategies and the value function under classical (dynamic) backward utilities in Section \ref{sec:comparison}. An important feature of the optimal strategy in the backward case is that it presents the classical two components, a myopic part and an additional risk adjustment. Such adjustment does not apply in the forward case: the investment strategy is purely myopic and this is due to the fact that the risk arising from incompleteness is accounted in the utility function rather than delegated the insurance company actions. Finally, we analyze a dynamic version of the conditional certainty equivalent for forward utility preferences and compare it with the backward utility setting.\\
The paper is organized as follows. Section \ref{sec:comb_model} introduces the mathematical framework for the interdependent insurance and financial market model. In Section \ref{sec:results} we describe the optimization problem, construct the forward dynamic exponential utility and characterize the optimal investment and proportional reinsurance strategy.
In Section \ref{sec:zerovol}, we discuss in detail the zero-volatility case and provide a comparison analysis with classical results on optimal investment and reinsurance problems via backward utility preferences.  The conditional certainty equivalent and its properties under forward and backward utility preferences can be found in Section \ref{sec:CCE}.
Some numerical experiments can be found in Section \ref{sec:numerics}. Finally, some assumptions and most technical proofs are collected in Appendix \ref{sec:tech_res}.

\section{The interdependent insurance-financial market model}\label{sec:comb_model}

We consider a complete probability space $(\Omega, \F, \P)$ endowed with a filtration   $\bF=\{\F_t,\ t \ge 0\}$, satisfying the usual conditions of completeness and right continuity. All processes introduced below are assumed to be $\bF$-adapted.

%On this filtered probability space, we consider a stochastic process $Y=\{Y_{t}, \ t \ge 0\}$. This is the stochastic factor that affects both the financial and the insurance markets and we model it as
%the solution of the following stochastic differential equation (in short SDE)
Let the process $Y=\{Y_{t}, \ t \ge 0\}$ be the solution of the stochastic differential equation (SDE)
\begin{equation}
\label{y}
\ud Y_{t}=\alpha(t,Y_{t}) \ud t + \beta(t,Y_{t}) \ud W_{t}^{Y}, \quad  Y_{0}=y_0 \in \R,
\end{equation}
where, $W^{Y}=\{W^{Y}_{t}, \ t \ge 0\}$ is a standard Brownian motion on $(\Omega, \F, \P; \bF)$, $\alpha: [0,+ \infty) \times \R \to \R$ and $\beta: [0,+ \infty) \times \R \to \R$ are two measurable functions. We assume that there exists a unique strong solution to the SDE \eqref{y} such that \begin{equation}
	\label{Ycoeff}
	\esp{\int_0^t |\alpha(s,Y_s)| \ud s}< \infty,
\qquad \esp{\int_0^t \beta^2(s,Y_s) \ud s}< \infty,
\end{equation}
for every $t\ge 0$ (this holds true under classical sufficient conditions on model coefficients, see e.g. \citep{GIHMSKOR}).
We interpret the stochastic process $Y$ as an index that accounts for an environmental/social/cultural factor and which may affect both the financial and the insurance markets.

We now introduce the actuarial and the financial market model. Starting with the insurance framework,  we represent the losses of the insurance company following a standard construction of the claim amount process, see for instance \citep[Chapter 2]{grandell1991}.
Define the process $\Lambda=\{\Lambda_t,\ t \ge 0\}$ as
\[
\Lambda_t=\int_0^t\lambda(s, Y_s), \ud s, \quad t \ge 0,
\]
where the function  $\lambda: [0,+ \infty) \times \R \to (0,+\infty)$ is measurable and satisfies
\begin{equation}\label{lambda_int}
	\mathbb{E}\left[\int_0^t\lambda(s, Y_s) \ud s\right]<\infty,
\end{equation}
for every $t \ge 0$.
Therefore, the process $\Lambda$ is non-decreasing and it satisfies $\Lambda_0=0$ and $\Lambda_t<\infty$ $\P$-a.s. for every $t \ge 0$.
Let $\eta=\{\eta_t,\ t \ge 0\}$ be a standard Poisson process (i.e. with intensity equal to $1$) independent of the Brownian motion $W^Y$, and consider the process  $N = \{N_t,\ t \ge 0\}$ defined as $N_t=\eta_{\Lambda_t}$, for every $t \ge 0$. Then, $N$ is a doubly stochastic Poisson process (see Lemma 4 and the discussion on stochastic random measures in \citep[Section 2.1]{grandell1991}). The process $\{\lambda(t,Y_{t}),\ t \ge 0\}$ is called the {\em intensity} of $N$ and  %(see condition \eqref{lambdaint})
condition \eqref{lambda_int} implies that $N$ is non-explosive. Moreover, the compensated process $\tilde{N}=\{ \tilde{N}_{t}, \ t \ge 0 \}$, given
by \begin{equation}
	\label{Nmg}
	\tilde{N}_t= N_t - \Lambda_t, \ \ t \ge 0,
	\end{equation}
	is a martingale (see \citep[Chapter II]{BREMAUD}).
The jump times of the process $N$ describe the claim arrival times and are represented through the increasing sequence of (nonnegative) random variables $\{T_n\}_{n \in \bN}$. Let $I\subset[0,+\infty)$ an arbitrary interval and let $\{Z_{n}\}_{n \in \bN}$ be a sequence of independent and $I$-valued random variables independent of $N$ such that for each $n \in \bN$, $Z_n$ is $\mathcal{F}_{T_n}$-measurable and $\mathbb{E}[e^{\eta Z_n}]<+\infty$ for every $n$ and for  $\eta>0$. %with continuous
For every $n \in \bN$, $Z_n$ indicates the claim amount at time $T_n$. The distribution of claim amounts is described by the map $F:[0,+\infty) \times \R \times I \longrightarrow [0,1]$ such that for each $(t,y) \in [0,+\infty) \times \R$, $F(\cdot,\cdot,z)$ is a distribution function, with $F(t,y,0)=0$.

The cumulative claim process $C=\{C_t,\ t \ge 0\}$ is given by
\begin{equation}
C_t=\sum_{n=1}^{N_t} Z_n, \quad t \ge 0,
\end{equation}
and represents the total losses of the insurance company due to claims. In the sequel, it will be useful to describe $C$ in terms of its associated random measure
%Then, we introduce the random measure associated with the process $C$
defined as follows
\begin{align}
\label{rm}
m(\ud t,\ud z)& %= \sum_{s \ge 0 : \Delta C_{s} \ne 0} \delta_{(s,\Delta C_{s})} (\ud t,\ud z)
= \sum_{n \in \bN} \delta_{(T_{n},Z_{n})} (\ud t,\ud z),
\end{align}
where $\delta_{(t,z)}$ is the Dirac measure at point $(t,z)\in [0,+ \infty) \times [0,+ \infty)$; hence, the claim process $C$ reads as
\begin{equation}
C_{t} = \int_{0}^{t} \int_I z m(\ud s,\ud z),\quad t \ge 0.
\end{equation}
\begin{remark}\label{nu}
For reader's convenience, we recall a set of properties of the random counting measure $m(\ud t, \ud z)$. For every $A\subset I$, the process
$\{m((0,t]\times A)\}$ is a counting process that gives the number of claims with claim size in the set $A$. In particular, $m((0,t]\times I)=N_t$ is the total number of claims up to time $t$.

The dual predictable projection $\nu$ of the random measure $m(\ud t, \ud z)$ is given by
	\begin{equation}\label{nudef}
	\nu(\ud t, \ud z) =  F(t,Y_t,\ud z)\lambda(t,Y_t)\ud t.
	\end{equation}
Moreover,  for every non-negative, predictable random field $\Gamma=\{\Gamma(t,z) , \ t\ge 0, \ z \in I \}$,
such that	\begin{equation}
	\esp{\int_0^t \int_I \Gamma(s,z) \lambda(s,Y_{s})  F(s,Y_s,\ud z)  \ud s} < \infty,
	\end{equation}
for every $t\ge 0$, the process
	\begin{equation}
	\bigg{\{ } \int_0^t \int_I \Gamma(s,z) \Big( m(\ud s,\ud z) - F(s,Y_s,\ud z) \lambda(s,Y_s) \ud s\Big)  , \quad t \ge 0 \bigg{\}}
	\end{equation}
	is a martingale, see e.g. \citep[Chapter $VIII$, Theorem T$3$]{BREMAUD} for further details.
Consequently, it holds that
 \begin{equation}
	\esp{\int_0^t \int_I \Gamma(s,z) m(\ud s, \ud z)} = \esp{\int_0^t \int_I \Gamma(s,z) F(s,Y_s,\ud z) \lambda(s,Y_s) \ud s},
	\end{equation}
for every $t \ge 0$.
\end{remark}

From now on, we consider the following set of assumptions.
\begin{ass}\label{assZint}
It holds that %We assume that the following integrability conditions hold:
\begin{equation}\begin{split}
		\esp{\int_0^t \int_I z \lambda(s,Y_s) F(s,Y_s, \ud z) \ud s   }< \infty, \qquad \esp{\int_0^t \int_I e^{\eta z} \lambda(s,Y_s)  F(s,Y_s, \ud z) \ud s  }< \infty, \\
		\esp{\int_0^t \int_I z e^{\eta z} \lambda(s,Y_s) F(s,Y_s, \ud z) \ud s  }< \infty, \quad \esp{\int_0^t \int_I z^2 e^{\eta z} \lambda(s,Y_s)  F(s,Y_s, \ud z) \ud s }< \infty,
	\end{split}
\end{equation}
for every $t \geq 0$ and $\eta>0$.
\end{ass}
Such integrability conditions will be used in some technical steps of the solution of the optimization problem. One of the consequences is that the cumulative claim process is non-explosive:
\begin{equation}\label{claimfiniti}
	\esp{C_t}=\esp{\int_0^{t} \int_I z m (\ud s, \ud z) } = \esp{\int_0^{t} \int_I z \lambda(s,Y_s) F(s,Y_s, \ud z) \ud s} < \infty,
\end{equation}
for every $t \ge 0$.
%If the claim arrival intensity is constant, i.e. $\lambda(t,y)=\hat{\lambda}>0$, for every $(t,y) \in [0,+\infty) \times \R$, we obtain the classical Cramér-Lundberg risk model, see e.g. \citep{SCHMIDLISCI} and references therein.
The insurance company receives premia and, in order to mitigate the risk exposure, reinsures part of its claims by continuously purchasing a proportional reinsurance contract.
We assume that both insurance and reinsurance premia depend on the index $Y$, and hence they are stochastic.
This is inline with the recent literature, see, e.g. \citep{delong2007mean,  cao2020optimal, ceci2022optimal}. It has also been observed in empirical studies, e.g. \citep{assa2022risk}, that having a dynamic premium would strongly reduce  the risk of insolvency.  Classical premium calculation principles can be extended to accommodate this assumption by {\em conditioning} on the value of $Y_t$. As remarked, for instance, in \citep{delong2007mean}, the conditional version of the expected value principle keeps the property that premium rate is proportional to the claim arrival intensity. This is also the case for the conditional variance principle.

We consider an insurance gross premium process of the form $\{a(t,Y_t),\ t \ge 0\}$, and a reinsurance contract of proportional type, with premium rate process $\{b(t,Y_{t},\Theta_{t}), \ t \ge 0 \}$, where $\Theta=\{\Theta_{t} ,\ t \ge 0\}$, represents the protection level, for some functions $a:[0,+ \infty) \times \R \to [0,+\infty)$ and $b:[0,+ \infty) \times \R \times [0,1] \to [0,+\infty)$.
In particular, at any time $t \ge 0$, $\Theta_t$ represents the percentage of losses which are covered by the reinsurance.
We assume that functions $a(t,y)$ and $b(t,y,\Theta)$ are  jointly continuous with respect to the pair $(t, y)$ and the triplet $(t,y,\Theta)$, respectively.
Throughout the paper, we will also assume the following integrability conditions
\begin{equation}\label{b_intcond}
\mathbb{E} \bigg[ \int_{0}^{t} a(s,Y_{s}) \ud s \bigg] < \infty, \qquad \mathbb{E} \bigg[ \int_{0}^{t} b(s,Y_{s},1) \ud s \bigg] < \infty,
\end{equation}
for every $t \ge 0$.
\begin{remark}\label{rem:premi}
Conditional versions of some classical premium calculation principle read as follows.  Under the expected value principle, for every $t \ge 0$ we get that
\[
a(t, Y_t)=(1+\delta^I)\lambda(t, Y_t)\int_I z F(t, Y_t, \ud z), \quad b(t, Y_t, \Theta_t)=(1+\delta^R) \Theta_t \lambda(t, Y_t)\int_I z  F(t, Y_t, \ud z),
\]
where $\delta^I>0$, $\delta^R >0$ represent the insurance and reinsurance safety loading respectively, and for the variance principle it holds that
\begin{align*}
a(t, Y_t)&=\lambda(t, Y_t)\left(\int_I z F(t, Y_t, \ud z)+\delta^I\int_I z^2  F(t, Y_t, \ud z)\right),\\
b(t, Y_t, \Theta_t)&=\Theta_t \lambda(t, Y_t)\left(\int_I z F(t, Y_t, \ud z)+\Theta_t\delta^R\int_I z^2 F(t, Y_t, \ud z)\right),
\end{align*}
for every $t \ge 0$. One of the drawbacks of these simple premium calculation principles is that the optimal reinsurance strategy does not explicitly depend on the claim arrival intensity, even in the conditional case. This does not happen for more sophisticated premium evaluation principles such as, e.g. the modified variance principle or the intensity-adjusted risk principle. In some of our numerical experiments we employ the modified  variance principle, under which premia are given by
\begin{align*} \label{mvp}
	a(t, Y_t)&=\lambda(t, Y_t)\int_I z  F(t, Y_t, \ud z)+\delta^I \frac{\int_I z^2  F(t, Y_t, \ud z)}{\int_I z  F(t, Y_t, \ud z)},\\
	b(t, Y_t, \Theta_t)&=\Theta_t \lambda(t, Y_t) \int_I z F(t, Y_t,\ud z)+\delta^R\Theta_t \frac{\int_I z^2 F(t, Y_t,\ud z)}{\int_I z F(t, Y_t,\ud z)},
\end{align*}
for every $t\ge 0$. %where $\delta^I>0$, $\delta^R >0$ represent the insurance and reinsurance safety loading, respectively.
This choice yields to $Y$-dependent Markovian optimal reinsurance strategies, adapted to the available information.
\end{remark}

We make the following set of assumptions that extends the usual natural hypotheses on premia to the stochastic case.

\begin{ass}\label{ipass}
The function $b(t,y,\Theta)$ has continuous partial derivatives $\pd{b(t,y,\Theta)}{\Theta}$, $\pds{b(t,y,\Theta)}{\Theta}$ in $\Theta \in [0,1]$ and it is such that \begin{itemize}
 \item[$(i)$] $b(t,y,0)=0$, for all $(t,y) \in [0,+ \infty) \times \R$, since the cedent does not need to pay for a null protection;
 \item[$(ii)$] $\pd{b(t,y,\Theta)}{\Theta}\ge0$, for all $(t,y,\Theta) \in [0,+ \infty) \times \R \times [0,1]$, because the premium is increasing with respect to the protection level;
 \item[$(iii)$] $b(t,y,1)>a(t,y)$, for all $(t,y) \in [0,+ \infty) \times \R$, for preventing a profit without risk;
 %\item[$(iv)$] there exists a positive constant $k$ such that \begin{equation}
 %|b(t,Y_{t},\Theta_t)-a(t,Y_{t})| \le k, \ \ \P{\rm -a.s.}, \ \ t \ge 0, %\Theta \in [0,1],
 %\end{equation} since the insurance premium and the reinsurance premium are roughly influenced by the stochastic factor in the same way.
\end{itemize}
In the sequel, $\pd{b(t,y,0)}{\Theta}$ and $\pd{b(t,y,1)}{\Theta}$ are understood as right and left derivatives, respectively.
\end{ass}

The insurance company surplus (or reserve) process $R^\Theta=\{R_t^\Theta,\ t \ge 0\}$ satisfies the SDE
\begin{equation}
 \label{surplus}
 \ud R_{t}^{\Theta}=a(t,Y_{t})\ud t - b(t,Y_{t},\Theta_{t}) \ud t - (1-\Theta_{t-}) \ud C_{t}, \quad R_{0}^{\Theta}=R_{0}>0.
\end{equation}
Conditions \eqref{b_intcond} imply in particular that the surplus process $R^\Theta$ is well defined and $\esp{R^\theta_t}<\infty$, for all $t \ge 0$.

The insurance company may invest its wealth in a financial market consisting of a locally risk-free asset with price process $S^0 = \{S^0_t,\ t \ge 0\}$ and a stock with price process $S=\{S_t, \ t \ge 0\}$. Here, we assume zero interest rate, that is, $S^0_t=1$ for every $t \ge 0$\footnote{The case of non-zero interest rate can be obtained using scaling arguments.}, and that $S$ satisfies the SDE
\begin{equation}\label{s}
\ud S_{t}=\mu(t,Y_{t})S_{t}\ud t + \sigma(t,Y_{t})S_{t} \ud W_{t}^S, \quad S_{0}=s>0,
\end{equation}
where the process $W^S=\{W_{t}^S,\ t \ge 0\}$ is a standard Brownian motion on $(\Omega,\F,\P;\bF)$, correlated with $W^Y$ with constant correlation coefficient $\rho\in [-1,1]$, and independent of the random measure $m(\ud t, \ud z)$.
The functions $\mu: [0,+\infty) \times \R \to \R$ and $\sigma: [0,+\infty) \times\R \to (0,+\infty)$, representing the drift and the volatility of the stock price process, respectively, are assumed to be measurable and such that the system of equations \eqref{y}-\eqref{s} admits a unique strong solution. Hence, the pair $(Y,S)$ is a Markov process. Sufficient conditions for existence and uniqueness of the solution to the system \eqref{y}-\eqref{s} can be found, e.g. in \citep[Theorem 5.2.1]{OKS}.
We also assume the Novikov condition
\begin{equation}\label{novikov}
	\esp {e^{\frac{1}{2}\int_0^t \left(\frac{\mu(s,Y_s)}{\sigma(s,Y_s)} \right)^2 \ud s}} < \infty,
\end{equation}
for every $t \ge 0$, which implies the existence of a risk-neutral measure for $S$ and ensures that the financial market does not admit arbitrage opportunities. %Note that, under this assumption it also holds that $\esp {\ds \int_0^t \left(\frac{\mu(s,Y_s)}{\sigma(s,Y_s)} \right)^2 \ud s} < \infty$, for every $t \ge 0$.

Note that, our model specification allows to describe a possible dependence between financial and insurance markets, through two types of interaction. The first one is realized by assuming that both financial market coefficients and the claim arrival intensity are functions of a common stochastic factor, the process $Y$. Indeed, in the real-world, exogenous events of different nature (such as social, cultural, geographical  conditions, political decisions, natural events) may affect the average number of claims that an insurance company experiences, as well as the performance of portfolios negotiated in the market. The second kind of interaction can be viewed as an environmental contagion effect and it is due to the non-zero correlation between the Brownian motions  $W^S$ and $W^Y$ driving the stock price and the factor process dynamics, respectively.

\section{Investment and reinsurance under forward dynamic exponential utilities}\label{sec:results}

The insurance company, with an initial wealth $x_0$, subscribes a {\em proportional reinsurance} and invests continuously the remaining part of its wealth in the financial market, following a self-financing strategy. For every $t \ge 0$, we denote by $\Pi_{t}$ the total amount of wealth invested in the risky asset at time $t$, and hence $X_t - \Pi_t$ is the capital invested in the riskless asset at time $t$. We do not make any restriction on the process $\Pi$, i.e. $\Pi_t\in \R$ for every $t \ge 0$, meaning short-selling and  borrowing/lending from the bank account are allowed. %Furthermore, for every $t \ge 0$, recall that $\Theta_t \in [0,1]$ denotes the dynamic \aB{protection} level at time $t$ corresponding to the reinsurance contract.
 %Whenever controls are restricted to the time interval $[t,+ \infty)$, we will use the notation $\A_t$.
For any reinsurance-investment strategy $H=(\Theta,\Pi)=\{(\Theta_t,\Pi_t),\ t\ge 0\}$, the wealth process $X^H=\{X_t^H,\ t \ge 0\}$  of the insurance company satisfies the SDE
\begin{align}
\ud X_{t}^H &= \ud R_{t}^\Theta + \Pi_{t} \frac{\ud S_{t}}{S_{t}} + (X_{t}^H-\Pi_{t})\frac{\ud S^0_{t}}{S^0_{t}} \\
& = \big{\{}  a(t,Y_{t}) - b(t,Y_{t},\Theta_{t}) +\Pi_{t}\mu(t,Y_{t}) \big{\}}  \ud t + \Pi_{t}\sigma(t,Y_{t}) \ud W_{t}^S - (1-\Theta_{t-})\ud C_{t},\label{wealth}
\end{align}
with $X_{0}^H=x_0 \ge 0$, being the initial wealth. The unique solution of the SDE \eqref{wealth} is given by
\begin{equation}\label{wealthsol}
\begin{split}	
X_t^H & = x_0 +\int_0^t  \left(a(s,Y_s) - b(s,Y_s,\Theta_s)+ \Pi_s\mu(s,Y_s) \right)\ud s + \int_0^t \Pi_s \sigma(s,Y_s) \ud W_s^S 	\\ & - \int_0^t \int_I  (1-\Theta_{s-}) z m(\ud s,\ud z),\quad t \ge 0.	
\end{split}
\end{equation}
%
%The definition of admissible strategies ensures that the wealth process stays finite. This is shown in the following lemma.

% , and will imply, in particular, that the optimization problem defined through the forward utility function \eqref{FDUsm} below, is well-posed.

%\begin{lemma}\label{lemma:finiteness}
%For any admissible strategy $H$, the wealth process $X^H$ satisfies $\esp{\sup_{t \in [0,T]} \big{|} X^H_t \big{|} }<\infty$, for every $t \ge 0$.
%\end{lemma}
%The proof is postponed to Appendix \ref{app:proofs}.

%%%%%%%%%%%%%%%%%%%%%%%%%

We aim to study an optimal investment/reinsurance decision problem for the insurance company, following a forward approach. %, see e.g.\citep{musiela2004example,musiela2008optimal, MZ}.
 Therefore,  we assume that the preferences of the insurance company are described by a dynamic utility function, which depends on time and possibly on some other additional stochastic drivers. %Indeed, we recall it is commonly recognized that individual preferences can change over time due, for instance, to impatience or changing conditions of the market.
%The goal of the insurance company is to maximize its expected forward utility. Nowadays, it is commonly recognized that individual preferences can change over time due, for instance, to impatience or changing conditions of the market.
%Therefore, following the approach of, e.g. \citep{musiela2004example,musiela2008optimal, MZ},
The insurance company starts with today' specification of its initial utility, without pre-committing an investment horizon and a terminal utility function at time $0$ and then moves forward in time, modifying its preferences in relation to the available information, via a self-generating criterion, according to the following definition (see also Definition 2.1 in \citep{musiela2008optimal}).
\begin{definition}\label{def:FDU}
Let $s\ge 0$ be a normalization point. An adapted process $U(x, s)=\{U_t(x, s),\ t \ge s\}$ is a {\em dynamic performance process} normalized at $s$ if
\begin{itemize}
\item[a)] the function $x \to U_t(x,s)$ is increasing and concave for all $t \ge s$;
\item[b)] for every self-financing strategy $H$, and $T\ge t\ge s$ it holds that
\[
U_ {t}(X_t^H,s)\ge \esp{U_{T}(X_{T}^{H},s) | \F_{t}};
\]
\item[c)] there exists a self-financing strategy $H^*$ such that  for all $T\ge t\ge s$ it holds that
\[
U_ {t}(X_t^{H^*},s)= \esp{U_{T}(X_{T}^{H^*},s) | \F_{t}};
\]
\item[d)] at $t=s$,
\[
U_{s}(x, s)=u(x),
\]
where $u(x)$ is a concave and increasing function of wealth.
\end{itemize}
\end{definition}

In this paper, we assume that utility data (i.e. the function $u(x)$) is of exponential type with constant risk aversion $\gamma>0$, we set the normalization point $s=0$ for notational convenience, and define a {\em  forward dynamic exponential utility} as follows \footnote{Sometimes, forward dynamics utilities with $s=0$ are called {\em spot utilities}.}.
\begin{definition}\label{FDU}
An adapted stochastic process $\{U_{t}(x),\ t \ge 0\}$ is a {\em forward dynamic exponential utility (FDU)}, normalized at $0$, if for all $t, T$ such that $0 \le t \le T$, it satisfies the stochastic optimization criterion
\begin{equation}\label{FDUsm}
		U_ {t}(x) = \left \{\begin{array}{ll}
		-e^{-\gamma x}, & \qquad t=0,  \\
		\max_{H \in \A} \esp{U_{T}(X_{T}^{H}) | \F_{t}}, & \qquad t\ge 0,
		\end{array} \right.
\end{equation}
with $X^H$ given by \eqref{wealth} and $\mathcal A$ denotes the set of admissible strategies which is defined below.
\end{definition}

Notice that in equation \eqref{FDUsm} we have omitted the dependence on the normalization point which has been fixed to $s=0$, and we have simply written $U_t(x)$ in place of $U_t(x,0)$.

This definition reflects the fact that the insurance company tracks its risk preferences over time and its optimal strategy is associated with the martingale property along the optimal wealth trajectory.

By construction, we get that forward dynamic exponential utilities have two important features: (i) there is no constraint on the length of the
trading horizon; (ii) a forward dynamic exponential utility coincides with the dynamic value function it generates, at all intermediate times.

To represent the forward exponential utility, we penalize the classical exponential utility with a stochastic process that describes the insurance company dynamic preferences; this penalizing process depends on market coefficients, collected premia and paid premia but it may also be linked to other sources of risk which affect the financial-insurance market. Precisely,
the penalizing process $P=\{P_t, \ t \ge 0\}$ is defined as
\begin{equation}\label{P}
	P_t=\int_{0}^{t}g(s,X_s^H,Y_s) \ud s + \int_{0}^{t}h(s,X_s^H,Y_s) \ud W^P_s, \quad t \ge 0,
\end{equation} where $g:[0,+\infty) \times \R^2 \longrightarrow \R$ and $h:[0,+\infty) \times \R^2 \longrightarrow \R$ are two measurable functions such that
\begin{equation}
	\label{Pcoeff}
	\esp{\int_0^t |g(s,X^H_s,Y_s)| \ud s + \int_0^t h^2(s,X^H_s,Y_s) \ud s}< \infty,
\end{equation}	for every $t \ge 0$ and every $H=(\Theta,\Pi) \in [0,1] \times \R$. Here, $W^P=\{W^P_{t}, \ t\ge 0\}$ is a standard Brownian motion which is correlated with the other two Brownian motions $W^Y$ and $W^S$ introduced in Section \ref{sec:comb_model}. In particular, we denote by $\rho^Y \in [-1,1]$ the correlation coefficient between $W^P$ and $W^Y$ and by $\rho^S \in [-1,1]$ the correlation coefficient between $W^P$ and $W^S$.

Now, we define the set of admissible reinsurance-investment strategies.

\begin{definition}\label{def:admissible_strategies}
An {\em admissible strategy} is a pair of predictable processes $H=(\Theta, \Pi)= \{(\Theta_{t},\Pi_{t}), \ t \ge 0\}$, representing the proportion of reinsured claims and the total amount invested in the risky asset, respectively, such that $\Theta=\{\Theta_{t}, \ t \ge 0\}$ takes values in $[0,1]$ and $\Pi=\{ \Pi_{t}, \ t \ge 0\}$ is $\R$-valued and such that
\begin{equation} \label{int_ammiss}
\esp{\int_0^t\left(|\Pi_s| |\mu(s,Y_s)|+\Pi_s^2 \sigma^2(s,Y_s)\right) \ud s} < \infty,
\end{equation}
and $\esp{e^{-\gamma X^H_t-P_t}}<\infty$, for every $t \ge 0$. We denote by $\A$ the set of admissible strategies. Whenever controls are restricted to the time interval $[t,+ \infty)$, we will use the notation $\A_t$.
\end{definition}
%\begin{remark}
%Since $g(t,Y_t)$ is integrable in view of Assumption \ref{ipass} and conditions \eqref{lambdaint}, \eqref{claimfiniti} and \eqref{b_intcond}, the process $\{U_{t}(x,s),\ t \ge s\}$ is well defined because . Moreover, the process $\{U_{t}(x,s),\ t \ge s\}$ is clearly %\bF$-adapted and
%normalized at $s$.
%\end{remark}

Next, we describe the functions $g$ and $h$ given in \eqref{P}, that identify the penalizing process. Firstly, we assume that \begin{equation}\label{condconc}
	-\pds{b}{\Theta}(t,y,\Theta) < \gamma \lambda(t,y) \int_I e^{\gamma(1-\Theta)z}z^2 F(t,y, \ud  z),
\end{equation}
for every $(t,y,\Theta) \in [0,+\infty) \times \R \times [0,1]$, and let $\widehat{\Theta}$ be the unique solution of the equation
\begin{equation}\label{solr}
	\pd{b}{\Theta}(t,y,\Theta)  = \lambda(t,y) \int_I z e^{\gamma (1-\Theta)z}  F(t,y,\ud z),
\end{equation}
which exists in view of condition \eqref{condconc}. Set
\begin{equation}\label{optr}
	\bar{\Theta}(t,y) = \left \{\begin{array}{lll}
		0, \qquad \quad \ (t,y) \in \D_0 \\
		1, \qquad \quad \ (t,y) \in \D_1\\
		\widehat{\Theta}(t,y), \quad (t,y) \in (\D_0\cup \D_1)^c, \\
	\end{array} \right.
\end{equation}
where \begin{align}
	& \D_0 \equiv \bigg{\{} (t,y) \in [0,+\infty) \times \R \ \big{|} \ \lambda(t,y) \int_I ze^{\gamma z} F(t,y,\ud z) \le \pd{b}{\Theta}(t,y,0)   \bigg{\}}, \label{d0}\\
	& \D_1 \equiv \bigg{\{} (t,y) \in [0,+\infty) \times \R \ \big{|} \ \pd{b}{\Theta}(t,y,1) \le \lambda(t,y)\int_I z F(t,y,\ud z) \bigg{\}}\label{d1}
\end{align}
and $(\D_0\cup \D_1)^c$ is the complementary set of $\D_0\cup \D_1$.
We introduce the function $\varphi:[0,+ \infty) \times \R \to \R$ defined as
\begin{equation}\label{fi}
	\varphi(t,y) = \gamma b(t,y,\bar{\Theta}) + \lambda(t,y) \int_I \Big(e^{\gamma (1-\bar \Theta)z} -1 \Big) F(t,y,\ud z),
\end{equation}
for every $(t,y) \in [0,+ \infty)  \times \R$, with $\bar \Theta = \bar \Theta(t,y)$ given by \eqref{solr}.
Then, we assume that $g$ and $h$ satisfy
\begin{equation}\label{P_eq}
	g(t,x,y)=-\gamma a(t,y)+\frac{1}{2}h^2(t,x,y)-\frac{1}{2 \sigma^2(t,y)}\big(\mu(t,y)-\rho^S \sigma(t,y)h(t,x,y) \big)^2 + \varphi(t,y).
\end{equation}
%The process $P$ clearly depends on both sources of risk $Y$ and $W^P$, although this is not specified to keep the same notation as in the classical literature on forward performances, and it is key in the analysis of the optimization problem.
\begin{remark}
	We observe that in view  of Assumption \ref{assZint} and \eqref{b_intcond}, $\mathbb{E}\left[\int_0^t \varphi(s, Y_s) \ud s\right]<\infty$, for each $t \ge 0$; therefore, if $\mathbb{E}\left[\int_0^t h^2(s, X_s^H, Y_s)\ud s\right]<\infty$, for each $t \ge 0$ and $H=(\Theta,\Pi) \in [0,1] \times \R$, then it implies that $\esp{\int_0^t |g(s, X_s^H, Y_s)|\ud s} < \infty$, for each $t \ge 0$, so that \eqref{Pcoeff} is satisfied. Indeed, for each constant control $H=(\Theta,\Pi) \in [0,1] \times \R$ and $t \ge 0$, we have
\begin{align}
&\mathbb{E}\left[\int_0^t |g(s, X_s^H, Y_s)|\ud s\right]\\
%&=\mathbb{E}\left[\int_0^t |-\gamma a(s,Y_s)+\frac{1}{2}h^2(t,X^H_s,Y_s)-\frac{1}{2 \sigma^2(s, Y_s)}\big(\mu(s, Y_s)-\rho^S \sigma(s,Y_s)h(s,X_s^H,Y_s) \big)^2 \right.\\ & \qquad \left.+ \varphi(s, Y_s)|\ud s\right]\\
&\le \mathbb{E}\left[\int_0^t \Big\{\gamma |a(s,Y_s)|+\frac{1}{2}(1+(\rho^S)^2)h^2(t,X^H_s,Y_s)+\frac{\mu^2(s, Y_s)}{2 \sigma^2(s, Y_s)}\right. \\ &\qquad \left. + \rho^S \left|\frac{\mu(s, Y_s)}{\sigma(s,Y_s)}\right||h(s,X_s^H,Y_s)| +| \varphi(s, Y_s)|\Big\}\ud s\right]\label{ineq1}\\
&\label{ineq2}\le \mathbb{E}\left[\int_0^t \Big\{\gamma |a(s,Y_s)|+\Big(\frac{(1+(\rho^S)^2)}{2}+(\rho^S)^2\Big)h^2(t,X^H_s,Y_s)+\frac{3\mu^2(s, Y_s)}{2 \sigma^2(s, Y_s)} +| \varphi(s, Y_s)|\Big\}\ud s\right]\quad{}\\
&\label{ineq3}<\infty,
\end{align}
where we have used the triangular inequality in \eqref{ineq1}, the Cauchy-Schwarz inequality in \eqref{ineq2}, and the integrability conditions \eqref{b_intcond} and \eqref{novikov}.
\end{remark}

Our first goal is to prove that the process $\{U_{t}(x),\ t \ge 0\}$ defined as
\begin{equation} \label{goal}
U_t(x)=-e^{-\gamma x - P_t}, \quad (t,x) \in [0,+ \infty) \times \R,
\end{equation}
where $P$ is given in \eqref{P},
is a forward dynamic exponential utility, normalized at $0$.

%In view of \eqref{FDUsm} and \eqref{goal}
Loosely speaking, the process $\left\{e^{-P_t},\ t \ge 0 \right\}$ can be interpreted as the density of a probability measure, which encompasses some of the characteristics of the combined market, as the Sharpe ratio $\ds \frac{\mu(t,Y_t)}{\sigma(t,Y_t)}$, insurance/reinsurance premia $(a(t,Y_t)-b(t, Y_t, \bar{\Theta}(t,Y_t) ))$ and the claims (both the arrival intensity and the sizes). It is important to notice that this measure also depends on the risk-aversion parameter $\gamma$ of the initial utility. The effect of the process $\left\{e^{-P_t},\ t \ge 0 \right\}$ is to penalize the standard utility in order to account for market features.

\begin{remark} \label{RDFU}
Since $P_t=\int_{0}^{t}g(s,X_s^H,Y_s) \ud s + \int_{0}^{t}h(s,X_s^H,Y_s) \ud W^P_s, \ t \ge 0$, different choices of the functions $g$ and $h$ will result in a different penalizing process.
For example, if $h(t,x,y)=0$, we are in the zero-volatility case and by \eqref{P_eq} the function $g$ does not depend on $x$ and is given by \begin{equation}
	g(t,y) = -\frac{1}{2} \bigg( \frac{\mu(t,y)}{\sigma(t,y)} \bigg)^2 - \gamma a(t,y)  + \varphi(t,y).
\end{equation}
We observe that this choice for $g$ also allows us to consider a function $h$ of form
\begin{equation}
	h(t,y)=-\frac{2 \rho^S}{1-(\rho^S)^2} \frac{\mu(t,y)}{\sigma(t,y)}.
\end{equation}
Other special cases are, e.g., $g(t,x,y)=\frac{1}{2}h^2(t,x,y)$ or $g(t,x,y)=\frac{1}{2}h^2(t,x,y)-\ds \frac{1}{2}\frac{\mu^2(t,y)}{\sigma^2(t,y)}-\gamma a(t,y) +\varphi(t,y)$, for every $(t,x,y) \in [0,+\infty) \times \R^2$.

The Brownian motion $W^P$ driving the dynamics of the penalizing process plays an important role. Two extreme cases are $W^P=W^S$ and $W^P=W^Y$ corresponding to the case where $P$ is perfectly (positively) correlated with the stock price $S$ and where $P$ is driven by the same risk source of $Y$.
In the most general setting, $W^P$ is correlated to both $W^S$ and $W^Y$. That is, the penalizing process includes the randomness coming from the stock price as well as the exogenous process that affects the price and the claims.
\end{remark}

The definition of the function $\varphi$, and hence of the function $g$, depends on the specific choice of $\bar\Theta$. Instead of taking $\bar\Theta$ as in \eqref{optr}, one could have taken $\bar{\Theta}(t,Y_t)=1$ for all $t \ge 0$: this choice corresponds to have $g(t,y)=-\ds\frac{1}{2}\bigg( \frac{\mu(t,y)}{\sigma(t,y)}\bigg)^2-\gamma(a(t,y)-b(t,y,1)$, which in turn leads to a dynamic utility that does not adjust for claims. Instead, setting $\bar{\Theta}(t,Y_t)=0$ for all $t \ge 0$ corresponds to set $g(t,y)= -\ds\frac{1}{2}\bigg( \frac{\mu(t,y)}{\sigma(t,y)} \bigg)^2 - \gamma (a(t,y) - b(t,y,0)) + \lambda(t,y) \int_I \Big(e^{\gamma z} -1 \Big) F(t,y,\ud z)$, which instead, implies that the penalizing process accounts for the whole claims amount. Our decision on the function $\bar{\Theta}(t,y)$ lies in the middle: in a certain sense, as we will see later, we would like to incorporate in the utility preferences the amount of claims that the insurance will not able to cover via the optimal strategy.

In the sequel we make the following integrability assumption.
\begin{ass}\label{h_nov}
For every $t \ge 0$, and every $H=(\Theta,\Pi) \in [0,1] \times \R$,
  \begin{equation}\label{novikov2}
	\esp {\ds e^{\frac{1}{2}\int_0^t h^2(s, X^H_s, Y_s) \ud s}} < \infty.
\end{equation}
\end{ass}

\begin{theorem}\label{familyfeu}
The process $\{U_t(x),\ t \ge 0 \}$, given for $x \in \R$ and $t \ge 0$ by
	\begin{equation} \label{FDUe}
	U_t(x)=-e^{-\gamma x - P_t},
	\end{equation}
	with the process $P$ defined in \eqref{P}, is a forward dynamic exponential utility, normalized at $0$.
%The function $u(t,x,y)=u(x)=-e^{-\gamma x}$, $x \in \R$, is the unique solution of \eqref{HJBu}.
	%Moreover, let $s \ge 0$ be the forward normalization point. Hence,
\end{theorem}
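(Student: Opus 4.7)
The plan is to apply Itô's formula to the process $\{U_t(X_t^H) = -e^{-\gamma X_t^H - P_t}\}_{t\ge 0}$ for an arbitrary admissible $H\in\A$, identify its drift, and show that the drift is nonpositive with equality precisely at a candidate optimal strategy $H^*$. Properties (a) and (d) of Definition \ref{def:FDU} are immediate: concavity and monotonicity of $x\mapsto -e^{-\gamma x-P_t}$ follow from $\gamma>0$, and $U_0(x)=-e^{-\gamma x}$ follows from $P_0=0$. Hence the substantive content lies in (b) and (c).

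Since $X^H$ has a continuous part and jumps coming only from the claim random measure $m(\ud t,\ud z)$, while $P$ is continuous, Itô's formula for semimartingales with jumps, combined with compensation of the jump integral through the dual predictable projection \eqref{nudef}, produces a decomposition $U_t(X_t^H) = U_0(x_0) + \mathcal M_t^H + \int_0^t U_{s-}(X_{s-}^H)\, D_s(\Theta_s,\Pi_s)\,\ud s$ for some local martingale $\mathcal M^H$. After substituting the expression \eqref{P_eq} for $g$ and completing the square in $\Pi$, the integrand collapses to
\[
D_t(\Theta,\Pi) = \psi(t,Y_t;\Theta) - \varphi(t,Y_t) + \tfrac{1}{2}\gamma^2\sigma^2(t,Y_t)\bigl(\Pi-\Pi^*_t\bigr)^2,
\]
where $\psi(t,y;\Theta) := \gamma b(t,y,\Theta) + \lambda(t,y)\int_I(e^{\gamma(1-\Theta)z}-1)F(t,y,\ud z)$ and $\Pi^*_t := (\mu(t,Y_t) - \rho^S\sigma(t,Y_t)h(t,X_t^{H^*},Y_t))/(\gamma\sigma^2(t,Y_t))$.

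Condition \eqref{condconc} is exactly strict convexity of $\Theta\mapsto\psi(t,y;\Theta)$, so any interior solution of \eqref{solr} is the unique minimizer, while the sets $\D_0,\D_1$ in \eqref{d0}--\eqref{d1} identify the regimes where the constrained minimum on $[0,1]$ is attained at the boundary. Thus $\bar\Theta(t,y)$ defined by \eqref{optr} is always the pointwise minimizer of $\psi$ on $[0,1]$, and by \eqref{fi} one has $\varphi(t,Y_t) = \psi(t,Y_t;\bar\Theta(t,Y_t))$. Both contributions to $D_t$ are therefore nonnegative, so $D_t(\Theta,\Pi)\ge 0$ for every $(\Theta,\Pi)\in[0,1]\times\R$, with equality exactly at $H^*=(\bar\Theta(t,Y_t),\Pi^*_t)$. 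Since $U_{t-}(X_{t-}^H)<0$, the drift of $U_t(X_t^H)$ is then nonpositive for every admissible $H$ and vanishes at $H^*$, so $\{U_t(X_t^H)\}$ is a local supermartingale for every $H\in\A$ and a local martingale at $H^*$.

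The main technical obstacle is upgrading these local statements to genuine (super)martingale properties, as Definition \ref{def:FDU}(b)--(c) demands. My plan is to write $\mathcal M^H$ as the sum of three stochastic integrals --- one against $\ud W^S$, one against $\ud W^P$, and a compensated Poisson integral of the form $\int_0^\cdot\int_I U_{s-}(X_{s-}^H)(e^{\gamma(1-\Theta_{s-})z}-1)(m-\nu)(\ud s,\ud z)$ --- and to verify that each is a true martingale on every finite horizon $[0,T]$. The Brownian components are controlled by the Novikov-type bounds \eqref{novikov} and Assumption \ref{h_nov} combined with the admissibility condition $\esp{e^{-\gamma X^H_t-P_t}}<\infty$, while the compensated jump integral is handled by the exponential moment estimates of Assumption \ref{assZint}. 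It remains to check admissibility of $H^*$: $\bar\Theta$ is $[0,1]$-valued and predictable by construction, while $\Pi^*$ inherits the integrability \eqref{int_ammiss} from \eqref{novikov} and Assumption \ref{h_nov}. The delicate step is this martingale upgrade, since the exponent $-\gamma X^H-P$ mixes Brownian and compensated jump contributions that interact nontrivially in the uniform integrability estimates.
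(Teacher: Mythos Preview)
Your approach is essentially the same as the paper's: apply It\^o to $-e^{-\gamma X_t^H-P_t}$, identify the drift, observe that the choice \eqref{P_eq} of $g$ makes it nonpositive for every admissible $H$ and zero at $H^*=(\bar\Theta,\Pi^*)$, and then upgrade the local (super)martingale to a genuine one. The paper packages the first two steps as an HJB equation for the value function $u(t,x,y,p)$ together with the ansatz $u=-e^{-\gamma x-p}$, but the computation is identical to yours.

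Where the paper differs is precisely at your ``delicate step''. Instead of trying to prove directly that the three stochastic integrals in $\mathcal M^H$ are true martingales --- which would require second-moment control of $e^{-\gamma X^H-P}$ that the admissibility condition $\esp{e^{-\gamma X^H_t-P_t}}<\infty$ does \emph{not} supply --- the paper introduces the localizing sequence
\[
\tau_n=\inf\{s\in[t,T]:\ |P_s|>n\ \text{or}\ X_s^H<-n\},
\]
so that on $\llbracket t,T\wedge\tau_n\rrbracket$ one has $|U_s(X_s^H)|\le e^{(\gamma+1)n}$. This crude bound, combined with \eqref{int_ammiss}, \eqref{Pcoeff} and Assumption~\ref{assZint}, makes each stopped integral square-integrable (respectively, $L^1$ for the compensated jump term), hence a true martingale, and the Verification Theorem (Theorem~\ref{thver}) then delivers the conclusion. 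Your direct route via Novikov and Assumption~\ref{h_nov} controls the \emph{exponent} but not the \emph{exponential} squared, so it will not close the argument as stated; adopting the localization is the cleanest fix.
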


%The process $\{U_t(x,0), t \ge 0\}$ is also called zero volatility forward dynamic exponential utility, since the process $\{h(0,t), t \ge 0\}$ if absolutely continuous with respect to the Lebesgue measure.

\begin{proof}
To show the result we prove that the process $\{U_t(x),\ t \ge 0 \}$ defined in \eqref{FDUe} verifies Definition \ref{FDU} (or equivalently, Definition \ref{def:FDU} with the initial condition $u(x,p)=-e^{-\gamma x-p}$). By construction, for every $t \ge 0$ the random variable $U_t(x)$ is $\F_t$-measurable. Next, we need to show that for arbitrary $t, T$ such that $0 \le t \le T$, we have
\begin{equation}\label{eq:1}
-e^{-\gamma x - P_t} = \max_{H \in \A} \esp{-e^{-\gamma X_{T}^H - P_T} \Big{|} \F_t}.
\end{equation}
The equality \eqref{eq:1} implies that $\{U_t(x), \ t \ge 0\}$ is a supermartingale for all admissible strategies $H \in \mathcal A$, and a martingale along some strategy $H^*\in \mathcal A$.
%We notice that  \eqref{eq:1} is equivalent to \aB{vero? $P_t$ dipende anche da $H$ tramite la $h$...}
%\begin{equation}\label{uproof}
%-e^{-\gamma x} = \max_{H \in \A} \esp{-e^{-\gamma X_T^H - (P_T-P_t)} \Big{|} \F_t}.
%\end{equation}
In view of the Markov property of the process $(X^H, Y,P)$, %(see Lemma \ref{xyp} in Appendix \ref{sec:tech_res}),
we consider the function $u:[0,+\infty) \times \R^3 \to (-\infty,0)$ given by
\begin{equation}\label{vfass}
	u(t,x,y,p) := \max_{H \in \A_t} \mathbb{E}_{t,x,y,p} \Big[ -e^{-\gamma X_{T}^H -P_T} \Big],
\end{equation}
where $ \mathbb{E}_{t,x,y,p} $ denotes the conditional expectation given $X_{t}^H=x$, $Y_{t}=y$ and $P_t=p$, for every $(t,x,y,p) \in [0,T) \times \R^3$.

If $u(t,x,y,p)$ is $\mathcal{C}^1$ in $t$ and $\mathcal{C}^2$ in $(x,y,p)$, by applying  It\^o's formula, we get that the function $u(t,x,y,p)$ solves the problem
\begin{align}
\label{HJBu} &\max_{(\Theta,\Pi) \in [0,1] \times \R} \L^H u(t,x,y,p) = 0, \quad  (t,x,y,p) \in [0,\infty) \times \R^3,\\
\label{HJBufinal}& u(T,x,y,p)  =-e^{-\gamma x-p}, \quad  (x,y,p) \in \R^3,
\end{align}
with the operator $\L^H$ denoting the infinitesimal generator of the Markov process $(X^H,Y,P)$ associated with a constant control $H$, see \eqref{mgenxyp} in Appendix \ref{sec:tech_res}.

Now, we choose %we verify that, along the strategy
$H^*=(\Theta^*,\Pi^*)$ given by
\begin{equation}
\Pi^*_t=\ds \frac{\mu(t,Y_t)}{\gamma \sigma^2(t,Y_t)}-\rho^S\frac{h(t,X^{H^*}_t,Y_t)}{\gamma \sigma(t,Y_t)}
\end{equation}
and $\Theta^*_t=\bar{\Theta}(t, Y_t)$, for each $t \in [0,T]$, with $\bar\Theta(t,y)$ as in \eqref{optr}; equality \eqref{eq:1} holds and then the martingale property along $H^*$ is satisfied.

Next, we use a guess-and-verify approach and we show that the function
$u(t,x,y,p)$ given by
\begin{equation}\label{ansatz}
u(t,x,y,p)=u(x,p)=-e^{-\gamma x-p}, \quad (x,p) \in \R^2,
\end{equation}
provides the unique classical solution to
the problem \eqref{HJBu}-\eqref{HJBufinal}. Clearly, $u(x,p)=-e^{-\gamma x-p}$, with $(x,p) \in \R^2$, is $\mathcal{C}^\infty$ and it is easy to check that solves \eqref{HJBu}-\eqref{HJBufinal}. Moreover,
uniqueness follows from the Verification Theorem  (see Theorem \ref{thver} in Appendix \ref{app:verification}).
Indeed, condition (ii) of Theorem \ref{thver} is trivially satisfied since the function $u(x,p)$ does not depend on $y$. Then, we just need to show the following conditions: for every $t,T$ such that $0\le t \le T$,
\begin{equation*}
	\begin{split}
	&\mathbb{E} \bigg[ \int_{t}^{T\wedge \tau_n} \Big( \sigma(s,Y_{s})\Pi_{s} \pd{u}{x}(s,X_s^H,Y_s,P_s) \Big)^{2} \ud v \bigg] < \infty,	
	\\ &\mathbb{E} \bigg[ \int_{t}^{T\wedge \tau_n} \Big( h(s,X_s^H,Y_{s}) \pd{u}{p}(s,X_s^H,Y_s,P_s) \Big)^{2} \ud v \bigg] < \infty,	
	\\  &\esp{\int_I \! \int_{t}^{T\wedge \tau_n} \! \Big{|} u\big{(}s,X_{s-}^H-(1-\Theta_{s-})z,Y_s,P_{s-}) \big{)} - u(s,X_{s-}^H,Y_s,P_{s-}) \Big{|} \lambda(s,Y_{s})  F(s,Y_s,\ud z) \ud s } < \infty,
	\end{split}\end{equation*}
	with $I\subset[0,+\infty)$ being an arbitrary interval, for a suitable, non-decreasing sequence of stopping times $\{ \tau_n \}_{n \in \bN}$ such that $\lim_{n\to +\infty} \tau_n= +\infty$.
	Therefore, for every $n \in \mathbb N$, we define
\begin{equation}
	\tau_n= \inf \left\{s \in [t,T] : |P_s| > n \vee \ X_s^H<-n  \right\}.
	\end{equation}
Over the stochastic interval $\llbracket t, T\wedge \tau_n\rrbracket$, since $X^H$ and $P$ do not explode, there is $\bar n\in \mathbb N$, with $\bar{n}\le n$, such that $\gamma X_t^H+ P_t \ge -\bar{n}(\gamma +1)$.
%$X_t^H\ge -\bar{n}$ and $P_t \ge -\bar{n}$.
Hence, we get that
	\begin{equation*}
	\begin{split}
	&\esp{ \int_t^{T\wedge \tau_n} \Big( \sigma(s,Y_{s})\Pi_{s} \pd{u}{x}(s,X_{s}^H,Y_s,P_s) \Big)^{2} \ud s } \\
	& \qquad = \esp{ \int_t^{T\wedge \tau_n}  \sigma^2(s,Y_{s})\Pi_s^2 \Big(\gamma e^{-\gamma X_{s}^H-P_s } \Big)^{2} \ud s }
	 \le \Big(\gamma e^{(\gamma +1) \bar n}\Big)^2 \esp{\int_t^{T} \sigma^2(s,Y_s)\Pi_{s} ^{2} \ud s } < \infty,
	\end{split}
	\end{equation*}
 since $\Pi$ is an admissible investment strategy. Moreover, we have that \begin{equation*}
 	\begin{split}
 		&\esp{ \int_t^{T\wedge \tau_n} \Big( h(s,X_s^H,Y_{s}) \pd{u}{p}(s,X_{s}^H,Y_s,P_s) \Big)^{2} \ud v } \\
 		& \qquad = \esp{ \int_t^{T\wedge \tau_n}  h^2(s,X_s^H,Y_s) \Big( e^{-\gamma X_{s}^H-P_s}  \Big)^{2} \ud s }
 		\le \Big(e^{(\gamma +1)\bar n}\Big)^2 \esp{\int_t^{T} h^2(s,X^H_s,Y_s)\ud s } < \infty,
 	\end{split}
 \end{equation*}
 thanks to \eqref{Pcoeff}. Finally, we have that
\begin{equation}
	\begin{split}
	& \esp{\int_{t}^{T\wedge \tau_n} \int_I \Big{|} u\big{(}s,X_{s-}^{H}-(1-\Theta_{s-})z,Y_s,P_{s-}) \big{)} - u(s,X_{s-}^H,Y_s,P_{s-}) \Big{|}  \lambda(s,Y_{s}) F(s,Y_s,\ud z) \ud s } \\
	& \qquad =  \esp{ \int_{t}^{T\wedge \tau_n} \int_I e^{-\gamma X_{s-}^H-P_{s-}}\Big{|} e^{\gamma (1-\Theta_{s-})z} -1\Big{|} \lambda(s,Y_{s})  F(s,Y_s,\ud z)\ud s}
	\\  & \qquad
	\le e^{(\gamma +1)\bar n} \esp{\int_{t}^{T} \int_I (e^{\gamma z}-1) \lambda(s,Y_{s})  F(s,Y_s,\ud  z) \ud s} < \infty,
	\end{split}
	\end{equation}
 %since the function $\lambda(t,y)$ is bounded and
 by Assumption \ref{assZint}. % \aC{qui la limitatezza di $\lambda$ non serve}.
 Therefore, Theorem \ref{thver} applies and the function $u(x,p)=-e^{-\gamma x-p}$ is the unique solution of the problem \eqref{HJBu}--\eqref{HJBufinal}.

We conclude that \eqref{eq:1} holds, and then $\{U_{t}(x)= -e^{-\gamma x -P_t},\ t \ge 0\}$ is a forward dynamic exponential utility.
\end{proof}
Now, we characterize the optimal reinsurance-investment policy for this family of forward dynamic exponential utilities.
\begin{proposition}\label{prop:optimal}
The optimal strategy $H^*=(\Theta^*,\Pi^*)$ is given by the optimal reinsurance protection level $\Theta^*=\{\Theta^*_t, \ t \ge 0\}$, where $\Theta^*_t=\bar \Theta_t=\bar \Theta(t,Y_t)$, with $\bar\Theta(t,y)$ defined by \eqref{optr}, and the optimal investment portfolio $\Pi^*=\{\Pi^*_t,\ t \ge 0\}$, where $\Pi^*_t=\Pi^*(t, X_t^{H^*},Y_t)$, with
\begin{equation}\label{opti1}
\Pi^{*}(t,x,y)= \frac{\mu(t,y)}{\gamma \sigma^2(t,y)} -\rho^S\frac{h(t,x,y)}{\gamma \sigma(t,y)},
\end{equation}
for every $(t,x,y) \in [0,+\infty) \times \R^2$.
\end{proposition}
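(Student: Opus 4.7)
The idea is to extract the optimiser directly from the HJB equation already displayed in the proof of Theorem~\ref{familyfeu}. Substituting the ansatz $u(t,x,y,p)=-e^{-\gamma x-p}$ into $\mathcal L^H u$ and using $u_t=u_y=u_{yy}=u_{xy}=u_{yp}=0$, $u_x=-\gamma u$, $u_p=-u$, $u_{xx}=\gamma^2 u$, $u_{pp}=u$, $u_{xp}=\gamma u$, together with
\[
\int_I\!\bigl[u(t,x-(1-\Theta)z,y,p)-u(t,x,y,p)\bigr]\lambda F(\ud z)=u\int_I(e^{\gamma(1-\Theta)z}-1)\lambda F(\ud z),
\]
I get $\mathcal L^H u = u\cdot\Phi(t,x,y,\Theta,\Pi)$, where
\[
\Phi=-\gamma(a-b(\Theta)+\Pi\mu)-g+\tfrac12\gamma^2\Pi^2\sigma^2+\tfrac12 h^2+\gamma\rho^S\Pi\sigma h+\lambda\!\int_I\!(e^{\gamma(1-\Theta)z}-1)F(\ud z).
\]
Because $u<0$, the pointwise HJB condition $\max_{(\Theta,\Pi)\in[0,1]\times\R}\mathcal L^H u=0$ is equivalent to $\min_{(\Theta,\Pi)}\Phi=0$, and crucially $\Phi$ splits additively into a function of $\Pi$ and a function of $\Theta$, so the two optimisations decouple.

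For the investment, the $\Pi$-dependent part $\tfrac12\gamma^2\sigma^2\Pi^2+(\gamma\rho^S\sigma h-\gamma\mu)\Pi$ is strictly convex in $\Pi$ (since $\sigma>0$), so the first-order condition $\gamma^2\sigma^2\Pi+\gamma\rho^S\sigma h-\gamma\mu=0$ gives a unique unconstrained minimiser, namely the formula \eqref{opti1}. For the reinsurance, the $\Theta$-dependent part is $\gamma b(t,y,\Theta)+\lambda\int_I e^{\gamma(1-\Theta)z}F(\ud z)$, whose second derivative in $\Theta$ equals $\partial_{\Theta\Theta}b+\gamma^2\lambda\int_I z^2 e^{\gamma(1-\Theta)z}F(\ud z)$, which is strictly positive by hypothesis \eqref{condconc}. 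Hence the map is strictly convex on $[0,1]$, so a unique minimiser $\bar\Theta(t,y)\in[0,1]$ exists. Solving $\partial_\Theta(\cdot)=0$ gives precisely \eqref{solr}; the constraint $\Theta\in[0,1]$ is handled by the Karush--Kuhn--Tucker conditions, which yield the boundary cases: the minimum is at $\Theta=0$ iff the derivative at $0$ is nonnegative, i.e. $\lambda\int_I z e^{\gamma z}F(\ud z)\le\partial_\Theta b(t,y,0)$ (this is $\mathcal D_0$), and at $\Theta=1$ iff the derivative at $1$ is nonpositive, i.e. $\partial_\Theta b(t,y,1)\le\lambda\int_I z F(t,y,\ud z)$ (this is $\mathcal D_1$). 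In the complement one has an interior stationary point $\widehat\Theta(t,y)$, giving the three-case description \eqref{optr}.

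It remains to check admissibility of $H^*=(\Pi^*,\Theta^*)$ in the sense of Definition~\ref{def:admissible_strategies}. The process $\Theta^*_t=\bar\Theta(t,Y_t)$ is $[0,1]$-valued and predictable by construction. For $\Pi^*$, joint measurability of $\mu,\sigma,h$ and continuity of the trajectories of $(X^{H^*},Y)$ deliver predictability; the integrability \eqref{int_ammiss} is inherited from the standing Novikov condition \eqref{novikov} together with Assumption~\ref{h_nov}, after inserting the explicit form \eqref{opti1} and using that $|\Pi^*|\lesssim|\mu/\sigma^2|+|h/\sigma|$. Evaluating $\Phi$ at $(\Theta^*,\Pi^*)$ and invoking \eqref{fi}--\eqref{P_eq} gives $\Phi\equiv 0$, so $\mathcal L^{H^*}u=0$ pointwise, which is exactly the martingale identity used in Theorem~\ref{familyfeu}; by that theorem's verification step $H^*$ is optimal.

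The only non-routine step is verifying that the piecewise definition \eqref{optr} really delivers the global minimiser on the closed interval $[0,1]$ (rather than merely a critical point); this is where strict convexity via \eqref{condconc} and the sign analysis of the one-sided derivatives at the endpoints are used, as above.
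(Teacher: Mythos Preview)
Your approach is essentially the same as the paper's: substitute the ansatz into the HJB, factor out $u<0$, split the resulting expression into a $\Pi$-quadratic and a $\Theta$-convex piece, optimise each separately, and read off \eqref{optr} and \eqref{opti1} via first-order/KKT conditions. The paper organises this through the functions $\Psi_1,\Psi_2$ (which are your pieces multiplied by $u$, hence concave rather than convex), but the computations are identical.

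There is, however, one genuine omission in your admissibility check. Definition~\ref{def:admissible_strategies} requires not only \eqref{int_ammiss} but also the exponential moment condition $\mathbb{E}\bigl[e^{-\gamma X^{H^*}_t - P_t}\bigr]<\infty$ for every $t\ge 0$, and you do not address this. The paper devotes the final third of its proof to exactly this point: one substitutes the explicit $H^*$ into $-\gamma X^{H^*}_t - P_t$, recognises a stochastic exponential $L_t$ (built from $W^S$ and an independent Brownian motion via the decomposition of $W^P$), uses Novikov \eqref{novikov} and Assumption~\ref{h_nov} to ensure $L$ is a square-integrable martingale, and then bounds the remaining jump term $\mathbb{E}\bigl[e^{2\gamma\sum_{i=1}^{N_t}Z_i}\bigr]$ by conditioning on $N_t$ and invoking the exponential-moment hypothesis on the $Z_n$. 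This step is not automatic from \eqref{novikov} and Assumption~\ref{h_nov} alone, and without it $H^*$ is not known to lie in $\mathcal A$, so the verification theorem cannot be invoked.

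A minor algebraic slip: your second derivative of the $\Theta$-part should read $\gamma\,\partial_{\Theta\Theta}b+\gamma^2\lambda\int_I z^2 e^{\gamma(1-\Theta)z}F(\ud z)$; the missing factor $\gamma$ does not affect the sign conclusion, since condition \eqref{condconc} is precisely the positivity of this expression divided by $\gamma$.
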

The proof can be found in Appendix \ref{proof_optimal}.

\begin{remark}
We briefly discuss the economic interpretation of the regions $\D_0$ and $\D_1$ introduced in \eqref{d0} and \eqref{d1}, respectively, when we describe the optimal reinsurance strategy $\Theta_t^*=\bar\Theta_t=\bar \Theta(t,Y_t)$, with $\bar \Theta(t,y)$ defined by \eqref{optr}.
We could say that, if the reinsurance is not too much expensive (more specifically, if the price of an infinitesimal protection is below a certain dynamic threshold), then full reinsurance is optimal; if the reinsurance is too much expensive (that is, if the price of an infinitesimal protection is above a certain dynamic threshold), then null reinsurance is optimal; otherwise, the optimal reinsurance strategy is provided by \eqref{solr}, i.e. by equating the marginal reinsurance cost and the marginal gain.
	
\end{remark}
The optimal reinsurance level and the optimal investment portfolio do not depend on the normalization point, which is consistent with the theory of forward dynamic utilities (see e.g. \citep{MZ}). Moreover, the form of the value function and of the optimal investment strategy are explicit, in contrast to their backward counterpart (see also Section \ref{sec:comparison} below). The optimal investment strategy in the classical backward case is usually given in feedback form and hence depends on the value function, characterized in terms of the solution of a partial differential equation, which is parabolic, except for a few cases. In this perspective, the forward dynamic approach also has a computational advantage, as the optimal strategy presents only the myopic component and the value function has a closed-form expression.
Finally, we observe that a penalizing process with a different choice of $\bar\Theta$ would lead to the same optimal protection level. Our choice is therefore motivated by the fact that taking $\bar \Theta$ as in \eqref{optr} means that the forward utility accounts for the amount of claims that are not covered by the reinsurance, and hence represents a risk for the insurance company.

\section{Zero-volatility forward exponential utilities}\label{sec:zerovol}

In this section we assume that the penalizing process satisfies
$$
P(t)=\int_{0}^{t}g(s,Y_s) \ud s,
$$
for all $ t \ge 0$,\footnote{In this section we use the notation $P(t)$ to underline that the penalizing process is absolutely continuous with respect to the Lebesgue measure.} where  the function $g$ is given by
\begin{equation}\label{zerovolatility}
	g(t,y) = -\frac{1}{2} \bigg( \frac{\mu(t,y)}{\sigma(t,y)} \bigg)^2 - \gamma a(t,y)  + \varphi(t,y),
\end{equation}
for all $(t,y)\in [0,+\infty)\times \R$, with $\varphi(t,y)$ introduced in \eqref{fi}. This is a special case where the function $h(t,x,y)$ is set equal to zero in the penalizing process $P$ given in \eqref{P}, hence there is no additional stochastic part describing  utility preferences.

The following result characterizes the forward utility process and the corresponding optimal strategy in the zero-volatility case.

\begin{corollary}\label{familyfeu1}
The process $\{U_t(x),\ t \ge 0 \}$, given for $x \in \R$ and $t \ge 0$ by
	\begin{equation} \label{FDUexp}
	U_t(x)=-e^{-\gamma x - \int_0^tg(s,Y_s) \ud s},
	\end{equation}
	with $g(t, y)$ defined in \eqref{zerovolatility}, is a forward dynamic exponential utility, normalized at $0$. Moreover, the optimal strategy $H^*=(\Theta^*,\Pi^*)$ is given by the optimal reinsurance protection level $\Theta^*=\{\Theta^*_t, \ t \ge 0\}$, where $\Theta^*_t=\bar \Theta_t=\bar \Theta(t,Y_t)$, with $\bar \Theta(t,y)$ defined by \eqref{optr}, and the optimal investment portfolio $\Pi^*=\{\Pi^*_t,\ t \ge 0\}$, where $\Pi^*_t=\Pi^*(t, Y_t)$, with
\begin{equation}\label{opti}
\Pi^{*}(t,y)= \frac{\mu(t,y)}{\gamma \sigma^2(t,y)},
\end{equation}
for every $(t,y) \in [0,+\infty) \times \R$.
\end{corollary}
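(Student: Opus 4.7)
The plan is to obtain the corollary as a direct specialization of Theorem~\ref{familyfeu} and Proposition~\ref{prop:optimal} to the case where the diffusion coefficient $h$ of the penalizing process is identically zero. In other words, I would argue that once we check that all standing hypotheses of Theorem~\ref{familyfeu} remain valid when $h(t,x,y)\equiv 0$, both the shape of the forward utility and the form of the optimal strategy follow by substitution.

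First I would verify the hypotheses. Assumption~\ref{h_nov} (the Novikov-type bound \eqref{novikov2}) is trivially satisfied because the integrand vanishes identically. Plugging $h\equiv 0$ into the specification \eqref{P_eq} yields
\begin{equation*}
g(t,x,y)=-\gamma a(t,y)-\frac{1}{2}\left(\frac{\mu(t,y)}{\sigma(t,y)}\right)^{2}+\varphi(t,y),
\end{equation*}
which no longer depends on $x$ and coincides with \eqref{zerovolatility}. The integrability condition \eqref{Pcoeff} for this reduced $g$ is then immediate: the term $\gamma a(t,Y_t)$ is integrable by \eqref{b_intcond}, the squared Sharpe ratio is integrable by the Novikov condition \eqref{novikov}, and $\varphi(t,Y_t)$ is integrable by Assumption~\ref{assZint} together with \eqref{b_intcond}, exactly as already observed in the remark following \eqref{P_eq}. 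Consequently the penalizing process reduces to the absolutely continuous form $P(t)=\int_{0}^{t}g(s,Y_s)\,\mathrm{d}s$.

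With these hypotheses in force, Theorem~\ref{familyfeu} applies verbatim and delivers that the process $U_t(x)=-e^{-\gamma x-P(t)}$ is a forward dynamic exponential utility normalized at $0$, proving \eqref{FDUexp}. The optimal reinsurance level $\Theta^{*}_t=\bar\Theta(t,Y_t)$ is unchanged, since $\bar\Theta$ is defined through \eqref{optr} independently of $h$. For the optimal investment, substituting $h\equiv 0$ in \eqref{opti1} of Proposition~\ref{prop:optimal} gives
\begin{equation*}
\Pi^{*}(t,y)=\frac{\mu(t,y)}{\gamma\sigma^{2}(t,y)},
\end{equation*}
which is exactly \eqref{opti}; notice that the wealth dependence disappears, so $\Pi^{*}_t$ becomes a Markov function of $(t,Y_t)$ alone, reflecting the purely myopic nature of the optimizer in the zero-volatility regime.

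I do not expect any real obstacle: the only non-automatic step is checking that the reduced drift $g$ still obeys the integrability bound \eqref{Pcoeff}, but this is inherited directly from the standing assumptions on $a$, $\mu/\sigma$, and $\varphi$. Everything else is a clean substitution of $h\equiv 0$ into the statements of Theorem~\ref{familyfeu} and Proposition~\ref{prop:optimal}.
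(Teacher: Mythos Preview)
Your proposal is correct and follows exactly the paper's own approach: the corollary is obtained by specializing Theorem~\ref{familyfeu} and Proposition~\ref{prop:optimal} to $h(t,x,y)\equiv 0$, with the additional checks on \eqref{novikov2} and \eqref{Pcoeff} being immediate. Your write-up is simply a more detailed version of the paper's one-line proof.
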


\begin{proof}
It follows from Theorem \ref{familyfeu} and Proposition \ref{prop:optimal} setting $h(t,x,y)=0$, for every $(t,x,y) \in [0,+\infty) \times \R^2$.	
\end{proof}

\subsection{Comparison with Backward Utilities Preferences}\label{sec:comparison}

Next, we compare the optimal strategy and the value of the optimal investment-reinsurance problem under dynamic forward zero-volatility utility to the classical backward utility, under the model setting outlined in Section \ref {sec:comb_model}. The first result provides the optimal strategy and the value function when we account for the backward exponential utility function $u^B(x)=-e^{-\gamma x}$.
Consider the backward reinsurance-investment problem
\begin{align}\label{pb:backward}
	\max_{H \in \mathcal A}\mathbb{E}\left[-e^{-\gamma X^H_T}\right],
\end{align}
where $T \in (0,+\infty)$ is a fixed time horizon which coincides with the end of the investment period. We introduce the Cauchy problem
\begin{equation}\label{eq:htilde}
	\left\{
	\begin{array}{ll}
		\pd{\phi}{t}(t,y;T)
		+ \pd{\phi}{y}(t,y;T) \left(\alpha(t,y)-\rho\ds \frac{\mu(t,y)}{\sigma(t,y)} \beta(t,y)\right)
		+ \ds \frac{1}{2} \frac{\partial^2\phi}{\partial y^2}(t,y;T) \beta^2(t,y), \\
		\qquad  + \ds \frac{1}{2}\bigg(\pd{\phi}{y}(t,y;T)\bigg)^2(1-\rho^2)\beta^2(t,y)-g(t,y)=0, &  (t,y) \in [0,T)\times \R,\\
		\phi(T,y;T)=0,  & \ y \in \mathbb{R}. %\label{eq:htildef}
	\end{array}
	\right.
\end{equation}
%and assume that it has a unique classical solution.

\begin{theorem}\label{thm:backward}
	Let $\phi(t,y;T)$ be the unique classical solution of the problem \eqref{eq:htilde}.
	Then, the value function corresponding to the problem \eqref{pb:backward} is given by $$V(t,x,y;T)=-e^{-\gamma x - \phi(t,y;T)}.$$
	Moreover, the optimal Markovian reinsurance-investment strategy $(\Theta^{*,B},\Pi^{*,B})=\{(\Theta_t^{*,B},\Pi_t^{*,B})=(\Theta^{*,B}(t,Y_t),\Pi^{*,B}(t,Y_t)),\ t \in [0,T]\}$ is given by
	\begin{align}\label{eq:strategia_backward}
	\Theta^{*,B}(t,y) & = \bar \Theta(t,y),\\
		\Pi^{*,B}(t,y) & =\frac{\mu(t,y)}{\gamma \sigma^2(t,y)} - \rho\frac{\beta(t,y) \ds\frac{\partial \phi(t,y;T)}{\partial y}}{\gamma \sigma(t,y)},
	\end{align}
	for every $(t,y) \in [0,T] \times \R$, where the function $\bar \Theta(t,y)$ is given in \eqref{optr}.
\end{theorem}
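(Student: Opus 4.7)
My plan is to prove Theorem \ref{thm:backward} by dynamic programming, closely following the guess-and-verify scheme already used for Theorem \ref{familyfeu} but now on the finite horizon $[0,T]$ and with a genuine terminal utility at time $T$ rather than a normalization at time $0$. Since $(X^H, Y)$ is Markov, I would introduce
$V(t,x,y;T) = \sup_{H \in \A_t} \mathbb{E}_{t,x,y}[-e^{-\gamma X_T^H}]$
and, by dynamic programming, associate to it the HJB equation
$\max_{(\Theta,\Pi)\in[0,1]\times\R} \L^H V(t,x,y;T) = 0$
on $[0,T)\times\R^2$ with terminal condition $V(T,x,y;T) = -e^{-\gamma x}$, where $\L^H$ is the same generator appearing in \eqref{HJBu} with the $P$-component removed.

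Second, I would plug in the ansatz $V(t,x,y;T) = -e^{-\gamma x - \phi(t,y;T)}$ with $\phi(T,y;T) = 0$. Using the identities $\partial_x V = -\gamma V$, $\partial_{xx}V = \gamma^2 V$, $\partial_y V = -V\partial_y\phi$, $\partial_{yy}V = V[(\partial_y\phi)^2 - \partial_{yy}\phi]$, $\partial_{xy}V = \gamma V\partial_y\phi$, and the jump-ratio $V(t, x-(1-\Theta)z, y; T)/V = e^{\gamma(1-\Theta)z}$, every term of $\L^H V$ factors through $V$. Since $V<0$, dividing by $V$ turns the $\max$ into a pointwise $\min$ that separates in $(\Theta, \Pi)$. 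The quadratic $\Pi$-problem has strictly convex coefficient $\gamma^2\sigma^2/2$ and yields
\[
\Pi^{*,B}(t,y) = \frac{\mu(t,y)}{\gamma\sigma^2(t,y)} - \rho\frac{\beta(t,y)\partial_y\phi(t,y;T)}{\gamma\sigma(t,y)},
\]
while the $\Theta$-problem is pointwise identical to the one solved in Section \ref{sec:results}, so the concavity condition \eqref{condconc} gives $\bar\Theta(t,y)$ from \eqref{optr} with optimal value $\varphi(t,y)$ from \eqref{fi}. Substituting these back and using the definition \eqref{zerovolatility} of $g$ would leave exactly the semilinear PDE \eqref{eq:htilde}.

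Third, given the assumed classical solution $\phi$, I would close with a verification theorem analogous to Theorem \ref{thver}. Applying It\^o's formula to $V(t, X_t^H, Y_t; T)$ along an arbitrary admissible $H$ should give a local supermartingale, and along the candidate $H^{*,B} = (\bar\Theta(t,Y_t), \Pi^{*,B}(t,Y_t))$ a local martingale; localizing by $\tau_n = \inf\{s\ge t: |Y_s|\vee|X_s^H| > n\}$ as in the proof of Theorem \ref{familyfeu} and letting $n\to\infty$ would upgrade both to true (super)martingales, whence taking conditional expectations at $t$ and $T$ and comparing with the terminal condition yields simultaneously the value-function identity $V = -e^{-\gamma x - \phi}$ and the optimality of $H^{*,B}$.

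The main obstacle is this verification step rather than the HJB derivation itself. Unlike the forward case where the analogue of $\phi$ was identically zero and all the relevant integrability bounds were available directly from Assumption \ref{assZint} and \eqref{b_intcond}, here $\phi$ and $\partial_y\phi$ are non-trivial functions of $(t,y)$, so checking the admissibility integrability \eqref{int_ammiss} for $\Pi^{*,B}$ and the exponential-moment estimates on $X_t^{H^{*,B}}$ demanded by Theorem \ref{thver} requires quantitative growth bounds on $\phi$ and $\partial_y\phi$. These bounds must be extracted from structural assumptions on the coefficients $\alpha, \beta, \mu, \sigma, \lambda$ and on the claim-size law; moreover, the quadratic gradient term in \eqref{eq:htilde} makes the PDE a genuine semilinear HJB, so the standing assumption of a classical solution itself typically rests on a Hopf--Cole-type linearization (clean when $\rho^2<1$) or a BSDE representation with quadratic generator.
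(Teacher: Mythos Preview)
Your proposal is correct and follows essentially the same guess-and-verify route as the paper's own (sketch) proof: ansatz $V=-e^{-\gamma x-\phi}$, separation of the HJB into the $\Pi$- and $\Theta$-subproblems yielding \eqref{eq:strategia_backward} and \eqref{eq:htilde}, then verification via It\^o's formula and a localization argument. Your localizing sequence stops $|Y_s|$ and $|X_s^H|$ whereas the paper stops $|\phi|$, $|\partial_y\phi|$ and $X^H$ directly, but since $\phi$ is assumed classical (hence continuous) these are equivalent; your closing caveats about admissibility of $\Pi^{*,B}$ and the Hopf--Cole linearization in fact go slightly beyond what the paper's sketch supplies.
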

A sketch of the proof of Theorem \ref{thm:backward} is provided in Appendix \ref{app:thm_backward}.

We immediately notice that forward and backward preferences induce different investment strategies and equality holds only if the factor process $Y$ and the price process $S$ are driven by uncorrelated Brownian motions. In particular, the optimal investment strategy under forward performances consists only of the myopic component: this is a consequence of the fact that any changes in state of the market are absorbed by a utility function that updates forward in time according to the new conditions. The backward approach, instead, is based on the assumption that a future utility preference is set at the beginning of the investment period and does not change over time. Therefore, what needs to account for changes in market conditions must be the investment strategy which consists of a myopic part and an additional demand. The latter accounts for the part of risk correlated with the stock price, and disappears if stochastic movements of the factor process $Y$ and the stock price process are orthogonal\footnote{By orthogonality we mean that the martingales driving the processes $Y$ and $S$ have zero predictable quadratic covariation.}.
Considering the reinsurance strategies under the backward and the forward utilities, we see that they coincide. This can be explained at the mathematical level, using the same argument of the zero-correlation case. In fact the martingale driving the factor process $Y$, which affects the loss intensity, is orthogonal to the loss process. This is an interesting effect that arises due to the nature of the forward utility, even when the financial and insurance frameworks are not independent.

Concerning the value functions, in both cases the structure is affine in the wealth. However, the multiplicative components $e^{-\phi(t, y; T)}$ and $e^{-P(t)}$\footnote{Notice that the function $P$ also depends on $y$, which we omitted to keep the same notation as in the classical literature.}
gather the effect due to market changes in very different ways.
If we compare the implicit expression of $P(t)$ given by $\pd{P(t)}{t}=g(t,y)$ with the initial condition $P(0)=0$  and the PDE that characterizes $\phi(t,y;T)$ (see equation \eqref{eq:htilde}) we understand that both performance criteria account for market changes in an aggregate way. However, the backward value function estimates the future risk, whereas in the forward case the value function accounts for past observation. The functions $\phi(t,y;T)$ and $P(t)$ are evidently not easy to compare from the analytical point of view. As discussed in \citep{MZ} these two processes are related to well known martingale measures, namely the minimal martingale measure and the minimal entropy measure.

%It remains to establish existence (and uniqueness) of a classical solution to the PDE in \eqref{eq:htilde}, which is in general difficult to obtain because it is not linear. We also notice that, if Brownian motions $W^Y$ and $W^S$ have zero correlation, then the PDE becomes linear and a solution exists under suitable conditions on the model coefficients (see, for instance \cite[Theorem 5.3]{pham1998optimal} or \citep[Theorem 1]{colaneri2019classical})

\subsection{Existence and uniqueness of a classical solution}
In the sequel, we provide sufficient conditions for existence and uniqueness of the solution to the PDE in \eqref{eq:htilde}, involved in the backward reinsurance-investment problem and, as a consequence, of the classical solution to the corresponding Hamilton-Jacobi-Bellman (in short HJB) equation.% associated with our optimization problem.

We observe that, if Brownian motions $W^Y$ and $W^S$ have zero correlation, then the PDE in \eqref{eq:htilde} becomes linear and a solution exists under suitable conditions on model coefficients (see, e.g. \citep[Theorem 5.3]{pham1998optimal} or \citep[Theorem 1]{colaneri2019classical}).

We also notice that the PDE is linear even when there is a perfect correlation (either positive or negative) between the two Brownian motions that drive the stochastic factor $Y$ and the stock price $S$. Consequently, also in this case it is easy to obtain existence of a classic solution of the problem \eqref{eq:htilde}.
Otherwise (i.e. for $\rho \in (-1,1) \setminus \{0\})$, similar to \citep{ZAR_trasf}, we introduce a transformation given by \begin{equation}
	\phi(t,y)=\kappa \ln(\xi(t,y)), \quad (t,y) \in [0,T]\times \R,
\end{equation}
for a suitable parameter $\kappa \in \R\smallsetminus\{0\}$.
Differentiating yields \begin{equation}
	\pd{\phi}{t}(t,y)=\kappa \frac{1}{\xi(t,y)} \pd{\xi}{t}(t,y), \quad \pd{\phi}{y}(t,y)=\kappa\frac{1}{\xi(t,y)}\pd{\xi}{y}(t,y), \quad \pds{\phi}{y}=\kappa \frac{1}{\xi}\pds{\xi}{y}-\kappa \frac{1}{\xi^2(t,y)}\bigg(\pd{\xi}{y}(t,y)\bigg)^2.
\end{equation}
Substituting the above derivatives in \eqref{eq:htilde} gives the following PDE
\begin{equation}\label{eq:xi1}
	\begin{array}{ll}
		\pd{\xi}{t}(t,y)
		+ \pd{\xi}{y}(t,y) \left(\alpha(t,y)-\rho \ds \frac{\mu(t,y)}{\sigma(t,y)} \beta(t,y)\right)+ \ds \frac{1}{2} \frac{\partial^2\xi}{\partial y^2}(t,y) \beta^2(t,y) \\ \qquad \ds  + \frac{1}{2\xi(t,y)}\pd{\xi}{y}^2(t,y)\beta^2(t,y) \big(\kappa (1-\rho^2) -1 \big)  - g(t,y)\frac{\xi(t,y)}{\kappa}=0,
	\end{array}
\end{equation}
for all $(t,y) \in [0,T)\times \R$, with the final condition $\xi(T,y)=1$, for every $ y \in \mathbb{R}$.
The above expression suggests that if we take the parameter
\begin{equation}
	\kappa=\frac{1}{1-\rho^2},
\end{equation} then the PDE \eqref{eq:xi1} becomes linear. Indeed, the transformed solution $\xi(t,y)$ satisfies
\begin{equation}\label{eq:xi}
	\pd{\xi}{t}(t,y)
	+ \pd{\xi}{y}(t,y) \left(\alpha(t,y)-\rho\ds \frac{\mu(t,y)}{\sigma(t,y)} \beta(t,y)\right)+ \ds \frac{1}{2} \frac{\partial^2\xi}{\partial y^2}(t,y) \beta^2(t,y) - \ds g(t,y)\frac{\xi(t,y)}{\kappa}=0,
\end{equation} for all $(t,y) \in [0,T)\times \R$, with the final condition $\xi(T,y)=1$, for every $ y \in \mathbb{R}$.

Now, we provide some sufficient conditions that ensure the existence of a classical solution of the equation \eqref{eq:xi} with the associated final condition, applying Theorem $1$ of \citep{HEATHSCH}. They prove that there exists only one classical solution and also provide a probabilistic representation by means of the Feynman–Kac formula. Then, let us introduce a new probability measure $\widetilde \P$ equivalent to $\P$, by setting
\begin{equation}
	\frac{\ud \widetilde{\P}}{\ud \P} \Big{|}_{\F_t}=\widetilde{L}_t=e^{-\big( \frac{1}{2} \int_0^t \rho^2 \left(\frac{\mu(s,Y_s)}{\sigma(s,Y_s)} \right)^2 \ud s  +\int_{0}^t \rho \left| \frac{\mu(s,Y_s)}{\sigma(s,Y_s)}\right| \ud W_s^Y \big)}, \qquad t\in[0,T].
	\end{equation}
Condition \eqref{novikov} ensures that the process $\widetilde{L}=\{\widetilde{L}_t, \ t \in [0,T]\}$ is a $\P$-martingale.
By the Girsanov theorem, the process $\widetilde{W}^Y=\{\widetilde{W}_t^Y, \ t\in [0,T]\}$, defined as $\widetilde{W}^Y_t = W^Y_t + \rho \int_0^t \frac{\mu(s,Y_s)}{\sigma(s,Y_s)} \ud s$, for every $t \in[0,T]$, is a $\widetilde{\P}$-Brownian motion. Thus, the dynamics of the stochastic factor $Y$ under $\widetilde \P$ are given by
\begin{equation}
	\label{y_Ptilde}
	\ud Y_{t}=\left( \alpha(t,Y_{t}) -\rho \frac{\mu(t,Y_t)}{\sigma(t,Y_t)}\beta(t,Y_t) \right) \ud t + \beta(t,Y_{t}) \ud \widetilde{W}_{t}^{Y}, \quad  Y_{0}=y_0 \in \R.
\end{equation}

\begin{proposition}
	Suppose that functions $\mu(t,y)$, $\sigma(t,y)$, %are locally Lipschitz continuous for $y \in \R$ and that
	$ \alpha(t,y)$, $\beta(t,y)$ are locally Lipschitz continuous in $(t,y) \in [0,T] \times \R$. Suppose also that $a(t,y)$, $b(t,y, \Theta)$ and $\lambda(t,y)$ are bounded and Lipschitz-continuous in $(t,y) \in [0,T] \times \R$.  	
	In addition, let us assume that $\beta(t,y) \geq \eta$, for all $(t,y) \in [0,T] \times \R$.
	Then, there exists a unique positive and bounded classical solution of equation \eqref{eq:xi} which is given by \begin{equation}\label{xi_FK}
		\xi(t,y)= \esp{e^{(1-\rho^2)\int_{t}^{T}g(s,Y_s)}\ud s},
	\end{equation} for all $(t,y) \in [0,T) \times \R$, with  terminal condition $\xi(T,y)=1$, for every $y \in \R$.
\end{proposition}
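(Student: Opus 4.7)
The plan is to apply Theorem~1 of \citep{HEATHSCH} to the linear parabolic problem \eqref{eq:xi} with terminal condition $\xi(T,y)=1$, rewritten in its natural probabilistic form under the measure $\widetilde\P$ constructed just above the statement. That theorem gives, under suitable regularity, ellipticity and growth conditions, the existence of a unique bounded classical solution together with its Feynman--Kac representation; once applied it yields both the formula \eqref{xi_FK} and the uniqueness claim in one stroke.

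The first step is to cast \eqref{eq:xi} as a Kolmogorov backward equation associated with $Y$ under $\widetilde\P$: the drift $\widetilde\alpha(t,y):=\alpha(t,y)-\rho\,\frac{\mu(t,y)}{\sigma(t,y)}\beta(t,y)$ coincides with the one in \eqref{y_Ptilde}, and the zero-order coefficient is $-g(t,y)/\kappa=-(1-\rho^2)g(t,y)$. Then I would verify the Heath--Schweizer hypotheses in turn: (i) local Lipschitz continuity in $(t,y)$ of $\widetilde\alpha$ and $\beta$, which follows from the hypotheses on $\alpha,\beta,\mu,\sigma$ together with $\sigma(t,y)>0$ implied by \eqref{novikov}; (ii) uniform ellipticity $\beta(t,y)\ge\eta>0$, which is assumed; (iii) a linear growth/boundedness condition on the zero-order coefficient, for which I would show that $g(t,y)$ is bounded on $[0,T]\times\R$. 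For (iii) the term $-\gamma a(t,y)$ is bounded by assumption; for $\varphi(t,y)$ in \eqref{fi}, the term $\gamma b(t,y,\bar\Theta)$ is bounded because $b(t,y,\cdot)$ is bounded, and the integral term is controlled via $\lambda(t,y)\int_I(e^{\gamma(1-\bar\Theta)z}-1)F(t,y,\mathrm d z)\le \lambda(t,y)\int_I(e^{\gamma z}-1)F(t,y,\mathrm d z)$, which should be bounded on $[0,T]\times\R$ under the stated hypotheses (if needed, one incorporates a uniform Assumption~\ref{assZint} style bound on the moment generating function of the claim distribution). Finally, $(\mu/\sigma)^2$ is handled by the standing hypothesis \eqref{novikov} together with the regularity of $\mu,\sigma$; in case this term is only locally bounded, one invokes the localized version of \citep{HEATHSCH} or approximates $g$ by a bounded sequence and passes to the limit via dominated convergence.

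Once the hypotheses are verified, Theorem~1 of \citep{HEATHSCH} delivers a unique bounded classical solution $\xi\in\mathcal C^{1,2}([0,T)\times\R)\cap\mathcal C([0,T]\times\R)$ together with the stochastic representation
\begin{equation}
\xi(t,y)=\widetilde{\mathbb E}\bigl[e^{(1-\rho^2)\int_t^T g(s,Y_s)\,\mathrm d s}\,\big|\,Y_t=y\bigr],
\end{equation}
which is precisely \eqref{xi_FK} (the paper's display lacks the expectation symbol and the conditioning, which I would correct). Positivity is then immediate from the exponential form, and boundedness away from $0$ and from $\infty$ follows from the uniform bounds on $g$.

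The main obstacle I anticipate is verifying the boundedness (or, failing that, the admissible polynomial growth) of $g(t,y)$. The Sharpe ratio term $(\mu/\sigma)^2$ and the claim-side exponential moment $\int_I e^{\gamma z}F(t,y,\mathrm d z)$ are the two terms that are not directly assumed to be bounded in the statement. If they are only in $L^1$ in time pathwise, one must either strengthen the assumption to uniform boundedness on $[0,T]\times\R$ or employ a localization argument: build an increasing sequence of bounded approximations $g_n\to g$, solve the corresponding linear PDEs by Heath--Schweizer, and use standard stability/monotone convergence arguments for parabolic PDEs to pass to the limit while preserving the Feynman--Kac representation \eqref{xi_FK}.
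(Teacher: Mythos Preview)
Your approach is essentially the paper's: apply Theorem~1 of \citep{HEATHSCH} after recasting \eqref{eq:xi} as a Kolmogorov backward equation under $\widetilde\P$, and read off the Feynman--Kac representation. The paper verifies Heath--Schweizer's conditions $(A1)$, $(A2)$, $(A3a')$--$(A3e')$ on the exhausting domains $D_n=(-n,n)$, which matches your checks (i) and (ii).

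The one place you diverge is your condition (iii). You try to show that $g$ is \emph{fully bounded}, and this is precisely where your anticipated obstacle appears: the Sharpe-ratio term $(\mu/\sigma)^2$ is not assumed bounded, so you are pushed toward localization or approximation arguments. The paper avoids this entirely. For condition $(A3e')$ they invoke \emph{Lemma~2} of \citep{HEATHSCH}, which reduces the required growth condition to showing that $g$ is continuous and \emph{bounded from above}. With that relaxation the verification becomes immediate term by term: $-\tfrac12(\mu/\sigma)^2\le 0$; $-\gamma a(t,y)$ is bounded since $a$ is bounded; and $\varphi(t,y)=\gamma b(t,y,\bar\Theta)+\lambda(t,y)\int_I(e^{\gamma(1-\bar\Theta)z}-1)F(t,y,\ud z)$ is bounded above because, by convexity in $\Theta$ (recall $\bar\Theta$ is the minimizer), $\varphi(t,y)\le \gamma b(t,y,1)$, which is bounded by hypothesis. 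So no localization or limiting procedure is needed; your ``main obstacle'' disappears once you use the correct one-sided bound from Heath--Schweizer's Lemma~2. The paper also records H\"older continuity of $g$ for $(A3c')$, which follows from the Lipschitz assumptions on $a,b,\lambda$ together with local Lipschitz continuity of $\mu,\sigma$; you should mention this explicitly rather than folding it into (iii).
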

\begin{proof}
	We apply Theorem $1$ of \citep{HEATHSCH} to prove existence and uniqueness of classical solution to equation \eqref{eq:xi} by verifying that their required conditions $(A1)$, $(A2)$, $(A3a')$-($A3e')$ hold in our case.
	Consider a sequence of bounded sets $\{D_n=(-n,n), \ n \in \mathbb{N} \}$, such that $\ds \bigcup_{n\in \mathbb{N}} D_n= \R$. Since functions $\mu(t,y)$, $\sigma(t,y)$, %are locally Lipschitz continuous for $y \in \R$ and that
	$ \alpha(t,y)$, $\beta(t,y)$ are locally Lipschitz continuous in $(t,y) \in [0,T] \times \R$, conditions ($A1$) and ($A2$) of \citep{HEATHSCH} for the coefficients $\alpha(t,y)-\rho \frac{\mu(t,y)}{\sigma(t,y)}\beta(t,y)$ and $\beta(t,y)$ are satisfied on $(t,y) \in [0,T] \times \R$. This also implies that $(A3a')$ holds. Moreover,  $\beta^2$ is uniformly elliptic on $(t,y) \in [0,T] \times\bar{D}_n$, for every $n \in \mathbb{N}$; i.e. ($A3b'$) holds. The H\"older-continuity of functions $a$, $b$ and $\lambda$ imply that also the function $g$, defined by \eqref{zerovolatility}, is H\"{o}lder-continuous, as required by ($A3c'$). Further, in our problem $g \equiv 0$ and $h\equiv 1$; thus, condition ($A3d'$) is trivially satisfied. To complete the proof, we check for ($A3e'$). Thanks to \citep[Lemma 2]{HEATHSCH}, it is sufficient to prove that the function $g$ is continuous and bounded from above, which is satisfied under our assumptions.
	Hence, the PDE \eqref{eq:xi} admits a unique classical solution $\xi(t,y)$ on $(t,y) \in [0,T] \times \R$ satisfying $\xi(T,y)=1$, given by \eqref{xi_FK}.
\end{proof}

\section{The conditional certainty equivalent}\label{sec:CCE}

In this section we discuss the concept of conditional certainty equivalent (in short CCE) introduced in  \citep{maggisfrittelli}, when we consider our family of forward exponential dynamic utilities. The definition of the CCE corresponds to the natural generalization to the dynamic and stochastic environment of the classical notion of the certainty equivalent, as given in \citep{pratt1976risk}.

\begin{definition}
Let $X^H=\{X_t^H, \ t \ge 0 \}$ be the wealth process corresponding to a constant strategy $H=(\Theta,\Pi) \in [0,1] \times \R$ and let $T>0$ be a finite time horizon. For every $0\leq t\leq T$, let $U_t(x)$ be the forward dynamic utility defined in \eqref{FDU}. Then, we define  the {\em conditional certainty equivalent} $C_t(X_T^H)$ as the random variable given by
\begin{equation}\label{CCE}
	C_t(X_T^H)  = U_t^{-1} \Big(\esp{U_T (X_T^H, 0) | \F_t },0\Big),
\end{equation} for every $t \in [0,T]$.
\end{definition}

The inverse exponential utility process $\{U_t^{-1}(x,0),\ t \ge 0\}$, and hence the CCE, is well defined (see \citep[Lemma 1.1 and Definition 1.1]{maggisfrittelli}) and satisfiest the properties below that can be directly derived from \cite[Proposition 1.1]{maggisfrittelli}: for every $0 \le t \le s \le T$ and every constant strategies $H, \widetilde H \in [0,1] \times \R$, we have
\begin{enumerate}[label=(\roman*)]
	\item $C_t(X_T^H) = C_t \big( C_s(X_T^H) \big)$.
	\item $C_t(X_t^H)=X_t^H$.
	\item If $C_s(X_T^H) \le C_s(X_T^{\widetilde H})$ then $C_t(X_T^H) \le C_t(X_T^{\widetilde H})$. \\ In particular, if $X_T^H \leq X_T^{\widetilde H}$ then $C_t(X_T^{H}) \le C_t(X_T^{\widetilde H})$ and if $X_T^H = X_T^{\widetilde H}$ then $C_t(X_T^{H}) = C_t(X_T^{\widetilde H})$.  %\\ The same holds if the inequalities are replaced by equalities.
	\item If $U(x,0) = \{U_t(x,0), \ t \in [0,T]\}$ is decreasing in time, %, i.e. $U_{t_2}(X_T) \le U_{t_1}(X_T)$, for every $t_1 \le t_2$ such that $t_1 \le t_2$,
then $C_t(X_T^H) \le \esp{C_s(X_T^H)|\F_t}$  and $\esp{C_t(X_T^H)} \le \esp{C_s(X_T^H)}$. Moreover, $C_t(X_T^H) \le \esp{X_T^H|\F_t}$ and therefore $\esp{C_t(X_T^H)} \le \esp{X_T^H}$.
\end{enumerate}

Some important financial implications can be drawn from the properties listed above. The first important feature of the certainty equivalent, coming from property $(iii)$, is time consistency. That is, any two wealths with the same CCE at a given time $s$, have the same CCE at any time prior than $s$. Second, by property $(iv)$,  we get that for dynamic utilities that are decreasing in time, the CCE is increasing. Considering time-decreasing utilities reflects the impatience of the insurance company, namely the effective desire for accumulation and the perspective undervaluation of the future. In this case, the guaranteed amount that a company would accept not to take the risk, becomes larger and larger as time goes, as a consequence of a the fact that the company has a better perception of the combined market conditions, which is represented by a smaller utility. The inequality $C_t(X_T^H) \le \esp{X_T^H|\F_t}$ expresses the risk aversion of the company.

In the sequel, we focus on the case of the zero-volatility forward dynamic exponential utility, discussed in Section \ref{sec:zerovol}, which allows to better discuss some of the properties of the CCE.

First, in this case it is immediate to see the monotonicity property in time. Indeed,
if the function $g(t,y)$ defined in \eqref{zerovolatility} is non-negative for every $(t,y) \in [0,+\infty)\times \R$, then given a constant strategy $H=(\Theta,\Pi) \in [0,1] \times \R$, the forward exponential utility process \eqref{FDUexp} is decreasing in time, and hence
\begin{equation}
	C_t(X_T^H)= -\frac{1}{\gamma} \ln \Big( \esp{e^{-\gamma X_T^{H} - \int_0^T g(s, Y_s)\ud s} \Big{|} \F_t}  \Big) - \frac{1}{\gamma} \int_0^t g(s, Y_s)\ud s ,
\end{equation} for every $t \in [0,T]$, i.e. the CCE  is increasing in time.

Next, we will provide an additional comparison between forward dynamic and classical static backward exponential utilities (i.e. $U(x)=-e^{-\gamma x}$,  $x \in \R$), in terms of the CCE. We observe that CCE, for the static case is given by
\begin{equation}
	\widetilde C_t(X_T^H) = -\frac{1}{\gamma} \ln \Big( \esp{e^{-\gamma X_T^H } \Big{|}
\F_t}  \Big),
\end{equation} for every $t \in [0,T]$.

At time $t=0$, CCEs under forward exponential dynamic utility and backward exponential static utility  reduce, respectively, to
\begin{align*}
C_0(X_T^H)=-\frac{1}{\gamma} \ln \Big( \esp{e^{-\gamma X_T^H - \int_0^T g(s, Y_s)\ud s} }  \Big), \qquad
\widetilde C_0(X_T^H)=-\frac{1}{\gamma} \ln \Big( \esp{e^{-\gamma X_T^H }}  \Big).
\end{align*}
Then, we have the following implication.

\begin{lemma}\label{iff}
For every $t \in [0,T]$, the cumulative penalizing $P(T)-P(t) >0,$ $\P-$a.s. if and only if $C_t(X_T^H)>\widetilde C_t(X_T^H),$ $\P-$a.s., for any given constant strategy $H \in [0,1] \times \R$. In particular, $P(T)>0$ if and only if $C_0(X_T^H)>\widetilde C_0(X_T^H)$.
\end{lemma}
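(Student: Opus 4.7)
The plan is to unpack both certainty equivalents in closed form and reduce the inequality $C_t(X_T^H) > \widetilde C_t(X_T^H)$ to a single inequality between conditional expectations involving $P(T)-P(t)$. Since $U_t(x) = -e^{-\gamma x - P(t)}$ with $P(t)$ being $\F_t$-measurable, one reads off the explicit inverse $U_t^{-1}(y,0) = -\frac{1}{\gamma}\bigl(\ln(-y) + P(t)\bigr)$ for $y<0$, and hence from \eqref{CCE} and the analogous formula for the backward static exponential utility
\begin{equation*}
C_t(X_T^H) = -\frac{1}{\gamma}\ln\esp{e^{-\gamma X_T^H - P(T)}|\F_t} - \frac{1}{\gamma}P(t), \qquad \widetilde C_t(X_T^H) = -\frac{1}{\gamma}\ln\esp{e^{-\gamma X_T^H}|\F_t}.
\end{equation*}
Pulling $e^{-P(t)}$ inside the first conditional expectation by $\F_t$-measurability and applying the strict monotonicity of $-\frac{1}{\gamma}\ln(\cdot)$, the inequality $C_t(X_T^H)>\widetilde C_t(X_T^H)$ a.s.\ becomes equivalent to
\begin{equation*}
\esp{e^{-\gamma X_T^H}\bigl(1 - e^{-(P(T)-P(t))}\bigr)|\F_t}>0\quad \P\text{-a.s.}
\end{equation*}

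Next, I would establish the two directions using that the map $z\mapsto 1 - e^{-z}$ has the same sign as $z$ and that $e^{-\gamma X_T^H}>0$ a.s. For the ``if'' part, assuming $P(T)-P(t)>0$ a.s., the integrand is a strictly positive random variable (integrable by admissibility $H\in\A$), so its $\F_t$-conditional expectation is a.s.\ strictly positive. For the ``only if'' part, the a.s.\ positivity of the above conditional expectation, holding for every constant admissible strategy $H\in[0,1]\times\R$, forces $P(T)-P(t)>0$ a.s.; this follows by contradiction, varying $H$ so as to concentrate the mass of the strictly positive weight $e^{-\gamma X_T^H}$ on a putative set $\{P(T)-P(t)\le 0\}$ of positive probability, which would produce a non-positive conditional expectation and contradict the standing assumption.

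The statement at $t=0$ is the immediate specialization, using $P(0)=0$ so that $P(T)-P(0)=P(T)$. The main obstacle I anticipate lies in the converse direction: recovering the pointwise sign of $P(T)-P(t)$ from the a.s.\ sign of a conditional expectation is not automatic, and the argument genuinely exploits the freedom to vary the constant strategy $H$ in $[0,1]\times\R$ in order to probe the null set $\{P(T)-P(t)\le 0\}$. The forward and $t=0$ direction, on the other hand, is a clean one-line consequence of the strict monotonicity of conditional expectation.
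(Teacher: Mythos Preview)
Your explicit formulae for $C_t(X_T^H)$ and $\widetilde C_t(X_T^H)$ and the reduction of the comparison to
\[
\esp{e^{-\gamma X_T^H}\bigl(1-e^{-(P(T)-P(t))}\bigr)\,\Big|\,\F_t}>0\quad \P\text{-a.s.}
\]
coincide with the paper's approach, and your ``if'' direction (strictly positive integrand $\Rightarrow$ strictly positive conditional expectation) is exactly the paper's.

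The divergence is in the ``only if'' direction. The paper does \emph{not} vary $H$: for a fixed $H$ it argues directly that, since $e^{-\gamma X_T^H}$ and $e^{-(P(T)-P(t))}$ are both nonnegative, the inequality
\[
\esp{e^{-\gamma X_T^H}\,\Big|\,\F_t}>\esp{e^{-\gamma X_T^H}\,e^{-(P(T)-P(t))}\,\Big|\,\F_t}\quad \P\text{-a.s.}
\]
is equivalent to $0<e^{-(P(T)-P(t))}<1$ a.s., i.e.\ to $P(T)-P(t)>0$ a.s. That is a one-line assertion in the paper, made for a single given $H$, with no appeal to any freedom in $H$.

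Your proposed contradiction argument --- choosing a constant $H=(\Theta,\Pi)\in[0,1]\times\R$ so that the weight $e^{-\gamma X_T^H}$ concentrates on $\{P(T)-P(t)\le 0\}$ --- has the gap you yourself flag, and it is a real one. The family of constant strategies is only two-dimensional, whereas the target event lies in $\sigma(Y_s:\,t\le s\le T)$ and is arbitrary; moreover $X_T^H$ also depends on $W^S$ and on the jump measure, both carrying noise that is not driven by $W^Y$. Tuning the two scalars $(\Theta,\Pi)$ gives no mechanism to make $e^{-\gamma X_T^H}$ preferentially load that particular $Y$-event, so the contradiction cannot be manufactured as sketched. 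If you want to follow the paper's route, you should argue the pointwise comparison $e^{-(P(T)-P(t))}<1$ directly for a fixed $H$, rather than via a concentration-of-weight device over varying $H$.
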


\begin{proof}
For every $t \in [0,T]$, we consider the difference
\begin{align*}
&C_t(X_T^H)-\widetilde C_t(X_T^H)= - \frac{1}{\gamma} \ln \Big( \esp{e^{-\gamma X_T^H -\int_t^Tg(s, Y_s) \ud s } \Big{|} \F_t}  \Big) + \frac{1}{\gamma} \ln \Big( \esp{e^{-\gamma X_T^H} \Big{|} \F_t}  \Big)
\end{align*}
This is larger than zero if and only if
\[\esp{e^{-\gamma X_T^H} \Big{|} \F_t}>\esp{e^{-\gamma X_T^H -\int_t^Tg(s, Y_s) \ud s } \Big{|} \F_t}.\]
Since $e^{-\gamma X_T^H}$ and $e^{-\int_t^Tg(s, Y_s) \ud s}$ are both nonnegative random variables then we get that the inequality holds if and only if $0<e^{-\int_t^Tg(s, Y_s) \ud s}<1$. Recalling that  $P(T)-P(t)=\int_t^Tg(s, Y_s)\ud s$ we get that $C_t(X_T^H)>\widetilde C_t(X_T^H)$ if and only if $P(T)-P(t)>0$.
Taking $t=0$ we obtain the second inequality for the CCEs at the initial time.
\end{proof}

Lemma \ref{iff} says that at $t=0$, the guaranteed amount that an insurance company endowed with a forward utility would accept not to take the risk of investing and reinsure its claims is larger than the guaranteed amount for an insurance company with static backward utility, i.e. $C_0(X_T^H)\ge \widetilde C_0(X^H_T)$ if and only if the total penalizing $P(T)=\int_0^T g(s, Y_s) \ud s$ is nonnegative. Indeed, in this case we will have that the forward utility is smaller than the backward utility, and hence that, at time $t=0$, the perception of the risk for the first company (the one with the forward utility) is smaller than for the second company (the one with the backward). Such condition is satisfied, e.g., when the penalizing process is increasing, that is when $g(t,y)>0$, for every $(t,y) \in [0,T] \times \R$.

Due to the structure of the penalizing process, it is not easy to compare, in general, the values of $C_t(X_T^H)$ and $\widetilde C_t(X_T^H)$. However, under specific conditions we can derive some considerations.

\begin{proposition}\label{propo:CCE_new}
%Assume that there are $c_1, c_2>0$ such that, for all $(t,y) \in [0,T]\times \mathbb{R}$,  $b(t,y,1)<c_1$  and
%$\lambda(t,y) \int_{I}\Big(e^{\gamma z} - 1 \Big) F_Z(t,y,\ud z)<c_2$,
If
\begin{equation}\label{eq:cond_i}
\frac{\mu^2(t,y)}{\sigma^2(t,y)} \ge %\aC{2\gamma \left(b(t,y,1)-a(t,y)\right) + 2 \lambda(t,y)\int_I\left(e^{\gamma z} -1\right) F(t,y,\ud z) }
- 2 \gamma a(t,y) + 2 \min\left\{b(t,y,1), \lambda(t,y) \int_{I}\Big(e^{\gamma z} - 1 \Big) F(t,y,\ud z) \right\},
\end{equation}
for every $(t,y) \in [0,T]\times\R$, then $\widetilde C_t(X_T^H) \ge  C_t(X_T^H)$, for each $t \in [0,T]$.

Otherwise, let $K:=\inf_{(t,y)\in [0,T] \times \R}\gamma b(t,y,\bar \Theta)+\lambda(t,y)\int_I\left(e^{\gamma(1-\overline \Theta)z}-1\right)  F(t,y,\ud z)>0$, with $\bar \Theta$ given by \eqref{solr}.  If  $a(t,y) < K/\gamma$ for all $(t,y) \in [0,T]\times \mathbb{R}$ and %instead,
\begin{equation}\label{eq:cond_ii}
	\frac{\mu^2(t,y)}{\sigma^2(t,y)} \le 2 \left(-\gamma a(t,y) + K\right), %+ \inf_{(t,y) \in [0,T] \times \R} \gamma b(t,z,\bar\Theta) + \lambda(t,y) \int_{I}\Big(e^{\gamma (1-\bar\Theta)z} - 1 \Big) F_Z(t,y,\ud z)\right),
	\end{equation}
for every $(t,y) \in [0,T]\times\R$, then $C_t(X_T^H) \ge  \widetilde C_t(X_T^H)$, for each $t \in [0,T]$.%\aC{where $\overline \Theta= \overline \Theta(t,y)$ is defined by \eqref{optr}}.
\end{proposition}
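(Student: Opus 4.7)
The plan is to invoke Lemma \ref{iff} to reduce the comparison of the two conditional certainty equivalents to the sign of the cumulative penalizing $P(T) - P(t) = \int_t^T g(s, Y_s)\, \ud s$, then to the pointwise sign of the integrand $g(t,y)$, and finally to show that each of hypotheses \eqref{eq:cond_i} and \eqref{eq:cond_ii} forces that pointwise sign via suitable bounds on $\varphi(t,y)$.

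Recalling
\[
g(t,y) = -\frac{1}{2}\left(\frac{\mu(t,y)}{\sigma(t,y)}\right)^{\!2} - \gamma\, a(t,y) + \varphi(t,y),
\]
with $\varphi$ as in \eqref{fi}, the heart of the argument is a variational characterization of $\varphi$. I would observe that the map
\[
\Psi(\Theta) := \gamma\, b(t,y,\Theta) + \lambda(t,y) \int_I \bigl(e^{\gamma(1-\Theta)z} - 1\bigr) F(t,y,\ud z)
\]
is strictly convex in $\Theta \in [0,1]$: its second derivative equals $\gamma\, \pds{b}{\Theta}(t,y,\Theta) + \gamma^2 \lambda(t,y) \int_I z^2 e^{\gamma(1-\Theta)z} F(t,y,\ud z)$, which is strictly positive by \eqref{condconc}. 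Its first-order condition is precisely \eqref{solr}, so combined with the boundary cases \eqref{d0}--\eqref{d1} one sees that $\bar\Theta$ defined in \eqref{optr} is the constrained minimizer of $\Psi$ on $[0,1]$, and hence $\varphi(t,y) = \min_{\Theta \in [0,1]} \Psi(\Theta)$.

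Evaluating $\Psi$ at the endpoints $\Theta = 0$ and $\Theta = 1$ then yields the two-sided estimate
\[
K \le \varphi(t,y) \le \min\left\{\gamma\, b(t,y,1),\; \lambda(t,y) \int_I (e^{\gamma z} - 1) F(t,y,\ud z)\right\},
\]
where $K$ is the infimum in the statement (the lower bound being the definition of $K$). For the first assertion, combining the upper bound with hypothesis \eqref{eq:cond_i} gives $\tfrac{1}{2}(\mu/\sigma)^2 + \gamma a \ge \varphi$, i.e.\ $g(t,y) \le 0$ pointwise; integrating over $[t,T]$ yields $P(T) - P(t) \le 0$ $\P$-a.s., and Lemma \ref{iff} delivers $\widetilde C_t(X_T^H) \ge C_t(X_T^H)$. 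For the second assertion, the hypothesis $a < K/\gamma$ makes $-\gamma a + K > 0$, and combining the lower bound $\varphi \ge K$ with \eqref{eq:cond_ii} gives $\tfrac{1}{2}(\mu/\sigma)^2 + \gamma a \le K \le \varphi$, i.e.\ $g(t,y) \ge 0$ pointwise; Lemma \ref{iff} then concludes $C_t(X_T^H) \ge \widetilde C_t(X_T^H)$.

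The main obstacle is the variational identification of $\varphi$ as the minimum of $\Psi$ over $[0,1]$, which requires careful handling of the three-region structure of $\bar\Theta$: on $\D_0$ and $\D_1$ the minimum is attained at a boundary because $\Psi'(0) \ge 0$ or $\Psi'(1) \le 0$ respectively (as one reads directly from \eqref{d0}--\eqref{d1}), while on the complement $\bar\Theta$ is the unique interior root of \eqref{solr}. Once this variational identity is in hand, the endpoint bounds on $\varphi$ and the sign analysis of $g$ are essentially algebraic and immediate; a minor bookkeeping point is that the endpoint evaluation at $\Theta = 1$ produces $\gamma\, b(t,y,1)$ rather than $b(t,y,1)$, so one uses $\gamma$ consistently inside the minimum appearing in \eqref{eq:cond_i}.
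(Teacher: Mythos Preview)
Your proposal is correct and follows essentially the same route as the paper's own proof: both identify $\varphi(t,y)=\min_{\Theta\in[0,1]}\Psi(\Theta)$ via convexity, bound it above by the endpoint values $\Psi(0)$ and $\Psi(1)$, and then invoke Lemma~\ref{iff} after reading off the sign of $g$. Your observation about the $\gamma$ factor in front of $b(t,y,1)$ is well taken---the paper's proof also writes the bound as $\min\{\gamma\,b(t,y,1),\,\lambda\int_I(e^{\gamma z}-1)F\}$, so the statement of \eqref{eq:cond_i} should indeed carry that $\gamma$ for the argument to go through without further restriction on $\gamma$.
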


\begin{proof}
%Since we are in the zero-volatility case, the function $g(t,y)$ given in \eqref{zerovolatility} can be also viewed as a function of $\bar \Theta$. In particular we have that
Recall that, in the zero-volatility case, the function $g(t,y)$ is given by
\begin{align}
g(t,y)&=-\frac{1}{2}\frac{\mu^2(t,y)}{\sigma^2(t,y)}-\gamma a(t,y)+ f(t,y, \bar {\Theta}),
\end{align}
where we have set $f(t,y, \Theta)=\gamma b(t,y,\Theta)+\lambda(t,y)\int_I \left(e^{\gamma(1-\Theta)z}-1\right) F(t,y,\ud z)$, for each $\Theta \in [0,1]$. Notice that $f(t,y, \Theta)$ is convex in $\Theta$, then it admits minimum, which is given by $\bar{\Theta}$. Moreover, since $\Theta \in [0,1]$, by convexity we get that
\begin{equation}\label{ftheta}
	f(t,y, \bar\Theta)\le \min\left\{\gamma b(t,y,1), \lambda(t,y)\int_I\left(e^{\gamma z} - 1 \right)  F(t,y, \ud z)\right\},
\end{equation}
for every $(t,y) \in [0,T]\times\R$;
%By the definition of $\overline{\Theta}$
%we get that $f(t,y,\Theta)\le \gamma b(t,y,1)+\lambda(t,y) \int_{I}\Big(e^{\gamma z} - 1 \Big) F_Z(t,y,\ud z)$, for every $(t,y) \in [0,T]\times\R$.
%in particular, the inequality holds when $\Theta=\overline \Theta$.
Then, by \eqref{eq:cond_i} and \eqref{ftheta}, we get that $g(t,y) \le 0$, for each $(t,y) \in [0,T]\times\R$, yielding $P(T)-P(t) \le 0$, for every $t \in [0,T]$.  Lemma \ref{iff} implies that $\widetilde C_t(X_T^H) \ge  C_t(X_T^H)$, for each $t \in [0,T]$.

%\min\left\{b(t,y,1), \lambda(t,y) \int_{I}\Big(e^{\gamma z} - 1 \Big) F_Z(t,y,\ud z) \right\}$, which implies that if \eqref{eq:cond_i} holds, then $g(t,y)<0$ and by Proposition \ref{iff}, we get that $\widetilde C_0(X_T^H)\ge C_0(X_T^H)$.

On the other hand, we note that $f(t,y,\bar\Theta) >0$, for every $(t,y) \in [0,T] \times \R$\footnote{Indeed, if $\bar\Theta \neq \{0,1\}$, then $b(t,y,\bar \Theta)>0$ and $\int_I\left(e^{\gamma(1-\Theta)z}-1\right)  F(t,y,\ud z)>0$. For $\bar \Theta=1$ $f(t,y,1)=\gamma b(t,y,1)>0$ and for $\bar \Theta=0$ it holds that $f(t,y,0)=\lambda(t,y)\int_I \left(e^{\gamma z}-1\right)  F(t,y,\ud z)>0$.}. We let $K>0$ be the infimum over all $(t,y) \in [0,T] \times \R$ of the function $f(t,y,\bar\Theta)$. If $a(t,y) < K/\gamma$ for all $(t,y) \in [0,T]\times \mathbb{R}$ and \eqref{eq:cond_ii}, we get that %Therefore, if $C$ is a positive constant with $C \le K$ such that, for all $(t,y) \in [0,T]\times \mathbb{R}$, $a(t,y) < C/\gamma$ and \eqref{eq:cond_ii} holds, we get
\begin{align}
g(t,y) = -\frac{1}{2}\frac{\mu^2(t,y)}{\sigma^2(t,y)}-\gamma a(t,y)+ f(t,y, \bar {\Theta})
 >  0,
 \end{align}
%for every $(t,y) \in [0,T] \times \R$,
which implies $P(T)-P(t) \ge 0$, for every $t \in [0,T]$. Finally, thanks to Lemma \ref{iff}, we have that $\widetilde C_t(X_T^H) \le  C_t(X_T^H)$, for each $t \in [0,T]$.
%It also holds that for ${\Theta} \neq \{0,1\}$, $b(t,z,{\Theta})> 0$ and $\lambda(t,y) \int_{I}\Big(e^{\gamma (1-{\Theta})z} - 1 \Big) F_Z(t,y,\ud z)>0$. Notice that if $\overline{\Theta}=1$, then $f(t,y,1)=\gamma b(t,y,1)>0$ by Assumption \ref{ipass} and that if $\overline{\Theta}=0$, then $f(t,y,0)= \int_{I}\Big(e^{\gamma (1-{\Theta})z} - 1 \Big) F_Z(t,y,\ud z)>0$. Then, $\inf_{(t,y) \in [0,T] \times \R} \gamma b(t,z,\overline{\Theta}) + \lambda(t,y) \int_{I}\Big(e^{\gamma (1-\overline{\Theta})z} - 1 \Big) F_Z(t,y,\ud z)>0$. Therefore, under \eqref{eq:cond_ii}, $g(t,y)>0$ and by Proposition \ref{iff}, we get that $ C_0(X_T^H)\ge \widetilde C_0(X_T^H)$
\end{proof}

Proposition \ref{propo:CCE_new} can be understood as follows: if the revenues from investment allow to cover for the payment of the claims, then the amount of guaranteed money an insurance company would accept today instead of taking a risk of getting more money at a future date, is larger when preferences are described by a forward dynamic utility.
In other words, if market conditions are favorable, the insurance company with backward static utility preferences has higher certainty equivalent, meaning that they are less willing to risk.
In the numerical section we present two examples where each of the two conditions of Proposition \ref{propo:CCE_new} is satisfied.

\section{Numerical analysis}\label{sec:numerics}

In this section we perform some numerical experiments in order to investigate some features of the optimal reinsurance-investment strategy and the optimal value process under forward exponential
utility preferences. We also discuss similarities and differences with the static backward case.

We assume that insurance and financial operations take place in one year, starting from today; this means that we analyze our theoretical results in a time interval $[0,T]$, with $T=1$.

The proposed model specification is rich enough to incorporate several stochastic factor models considered in the literature. In our first example, the stochastic factor process is chosen to follow a Vasicek model, i.e.
\begin{equation}
	\ud Y_t=(\alpha_1+\alpha_2Y_t)\ud t+\beta\ud W^Y_t, \quad Y_0=-0.2,
\end{equation}
with constant coefficients $\alpha_1=0.2$, $\alpha_2=-1$ and $\beta=0.1$.

We also suppose that the function $\lambda$  is given by  $\lambda(t,y)=\lambda_0 e^{y}$, for each $(t,y) \in [0,T] \times \R$, where $\lambda_0=k e^{-Y_0}$ with $k>0$,  which guarantees that the intensity of the claims arrival process $N$ is positive. For the sake of semplicity, all random variables $\{Z_n\}_{n \in \bN}$ have common distribution and the claim size distribution, specifically the claim size distribution is assumed as $\Gamma(\alpha_{{\tiny \Gamma}},\beta_{{\tiny \Gamma}})$, $\alpha_{{\tiny \Gamma}},\beta_{{\tiny \Gamma}}>0$. In particular, we set $k=1$, $\alpha_{{\tiny \Gamma}}=1$ and two different values of $\beta_{{\tiny \Gamma}}$, namely $\beta_{{\tiny \Gamma}}=2$ which corresponds to larger losses and $\beta_{{\tiny \Gamma}}=1/3$ to smaller claims.

We consider a risky asset price process with an affine appreciation rate and a uniformly elliptic Scott volatility, described by the following SDE
\begin{equation}\label{eq:stock_full}
	\ud S_t=\mu(Y_t)S_t\ud t+\sigma(Y_t)S_t\ud W^S_t, \quad S_0=1,
\end{equation} where \begin{equation}
\mu(y):=\mu_1+\mu_2y, \qquad \sigma(y)=:\bar{c}\sqrt{\epsilon_1 + e^{\epsilon_2 y}},
\end{equation} for every $y \in \R$.
 %with $\mu_1=0.1$, $\mu_2=0.2$, $\bar{c}=0.25$, $\epsilon_1=0.01$ and $\epsilon_2=0.02$.
The Brownian motions $W^S$ and $W^Y$ are correlated with correlation coefficient $\rho$.
We set the risk aversion coefficient to $\gamma=0.5$. \\
Firstly, we study the case where the reinsurance premium is calculated under the conditional modified variance principle, and given by
\begin{equation}\label{RP_NA}
	b(y,\Theta)=\alpha_{\Gamma}\beta_{\Gamma} \lambda(y) \Theta + \delta_{R} \beta_{\Gamma}\Theta,
\end{equation} for each $(y,\Theta) \in \R\times [0,1]$. As pointed in Remark \ref{rem:premi}, this premium calculation principle allows to keep the dependence on $Y$ in the optimal reinsurance strategy, which in turn, means that the strategy adapts to the index value.  %over time.
Regarding the insurance safety loading  $\delta_I$ and the reinsurance safety loading $\delta_R$, we have that $\delta_I<\delta_R<2\delta_I$; thus, we set $\delta_I=0.3$ and $\delta_R=0.5$.

In Figure \ref{THETA(Y)} we see that in case of small claims, null reinsurance might be optimal for almost all negative values of the stochastic factor, whereas in case of large claims a big percentage of claim is always reinsured.
\begin{figure}[ht]
	\includegraphics[width=8cm]{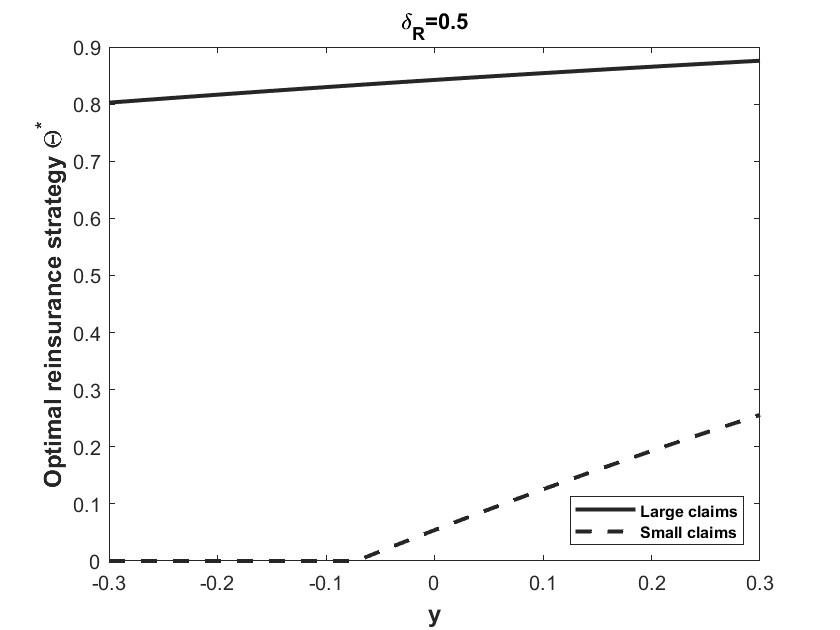}
	\caption{\label{THETA(Y)} Optimal proportional reinsurance strategy with respect to the values of the index, for large claims (solid line) and small claims (dashed line)}
\end{figure}

The protection level in case of big claims is larger than in the case when claims are smaller, that is, under the same claim arrival intensity, when expected claim amount is small, the insurance company purchases less reinsurance and for large losses it buys reinsurance with a higher protection level, to mitigate the risk.
Since $Y$ follows a Vasicek model with mean level $0.2$, starting from $-0.2$, this representation allows us to describe also the evolution of the protection level over time. Indeed, simulating the Brownian motion $W^Y$, we notice from Figure \ref{THETAtraj} that trajectories of the optimal reinsurance strategy are accumulated around $0.85$ in case of large claims and in the range $[0,0.25]$ in case of smaller claims.
\begin{figure}[ht]
	\includegraphics[width=8cm]{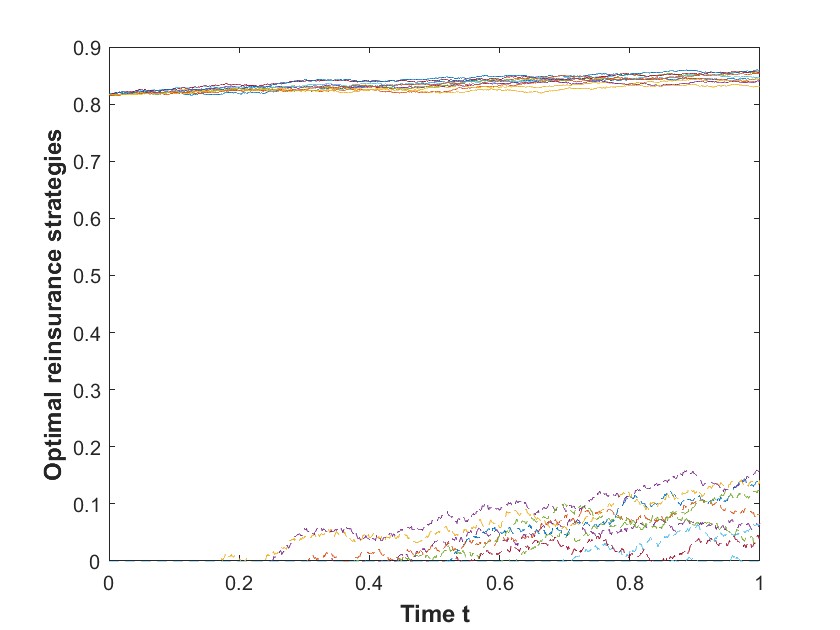}
	\caption{\label{THETAtraj} Some trajectory of the optimal proportional reinsurance strategy for large claims (solid line) and small claims (dashed line)}
\end{figure}

To better understand the form of the reinsurance strategy, in Figure \ref{THETA(Y)_EXP} and Figure \ref{THETA(Y)_cheap} we consider different safety loadings as the ones used in all the rest of this numerical section.  Using a reinsurance premium of the form \eqref{RP_NA}, we get that if reinsurance is expensive (which is the case, for example, of safety loadings at level $\delta_{R}=0.9$, $\delta_I=0.6$), the insurance company will reinsure fewer claims. In particular, as we see in Figure \ref{THETA(Y)_EXP}, in the case of small claims null reinsurance is optimal for every value of the index in the range $[-0.3,0.3]$ \footnote{The range of $y$ has been chosen according to the values in the simulations of the index $Y$.}.
\begin{figure}[ht]
	\includegraphics[width=8cm]{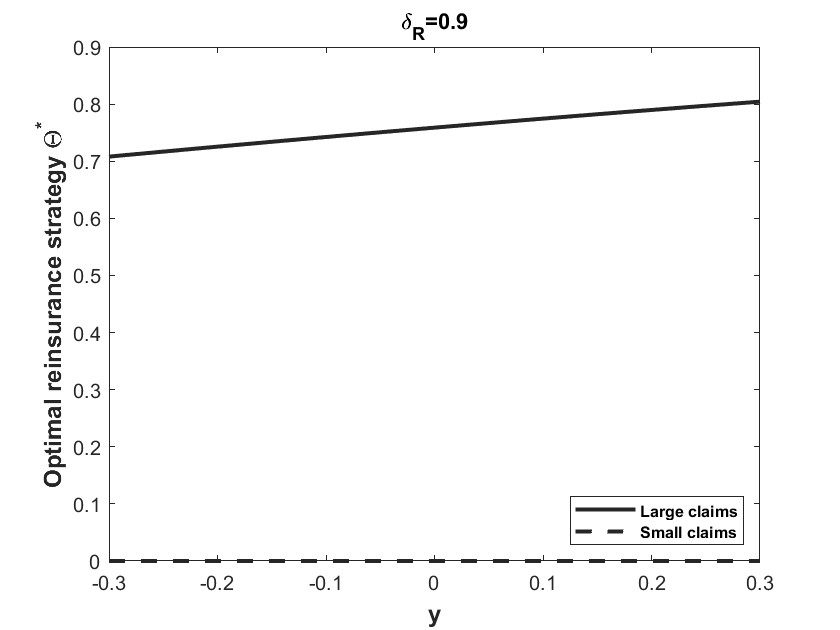}
	\caption{\label{THETA(Y)_EXP} Optimal proportional reinsurance strategy with respect to the values of the index, for large claims (solid line) and small claims (dashed line), with safety loadings $\delta_{R}=0.9$, $\delta_I=0.6$. }
\end{figure}

On the other hand, if reinsurance safety loading is small (for instance,  $\delta_R=0.1$, $\delta_I=0.07$), the insurance would opt for a larger protection level $\Theta$.
\begin{figure}[ht]
	\includegraphics[width=8cm]{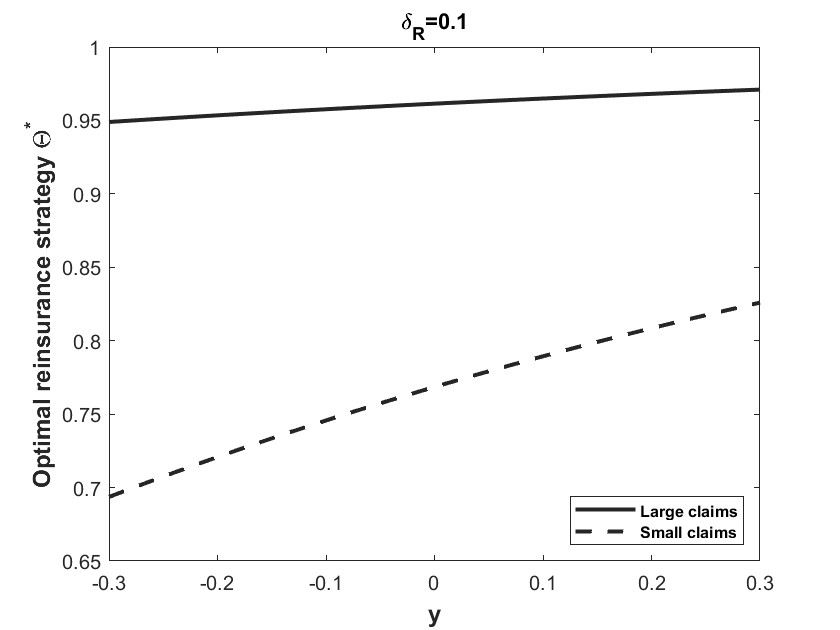}
	\caption{\label{THETA(Y)_cheap} Optimal proportional reinsurance strategy with respect to the values of the index, for large claims (solid line) and small claims (dashed line), with safety loadings $\delta_R=0.1$, $\delta_I=0.07$. }
\end{figure}

As can be seen from Figure \ref{THETA(Y)_cheap}, in both cases (large claims and small claims), under our parameter setting, full reinsurance and null reinsurance are never optimal. The reinsurance level, however, is always large, as reinsurance is very cheap.

%The protection level corresponding to big claims turns out to be larger than for small claims, that is, under the same claim arrival intensity, when expected claim amount is small, the insurance company purchases less reinsurance and for large losses she buys reinsurance with a higher protection level, to mitigate the risk.

Now, we restore the insurance safety loading at level $\delta_I=0.3$ and the reinsurance safety loading at level  $\delta_R=0.5$. We consider the optimal portfolio strategy. As pointed out by Remark \ref{RDFU}, different choices of the functions $g$ and $h$ lead to a different forward utility and, as a consequence, to a different optimal investment strategy. In the sequel, we consider the following choices of the function $h$, for every $(t,x,y)\in [0,T]\times \R^2$:
\begin{itemize}
	\item[(i)] $h_1(t,x,y)=0$. This corresponds to zero volatility utility function and the optimal investment strategy is the myopic strategy  $\Pi^*_1(y)=\frac{\mu(y)}{\gamma \sigma^2(y)}$;
	\item[(ii)] $\displaystyle h_2(t,x,y)=h_2(y)=-2 \frac{\rho^S}{1-(\rho^S)^2}\frac{\mu(y)}{\sigma(y)}$. In this case the insurance market only affects the drift of the penalizing process in the utility function but not its volatility. The optimal investment strategy is given by $\displaystyle \Pi^*_2(y)=\frac{\mu(y)}{\gamma \sigma^2(y)} + \frac{\mu(y)}{\gamma \sigma^2(y)} \frac{\rho^S}{1-(\rho^S)^2} $ and consists of a myopic component and an additional term that accounts for the correlation between the stock price and the preferences of the insurer;
	\item[(iii)] $\displaystyle h_3(t,x,y)=h_3(y)=\frac{\mu(y)}{\rho^S \sigma(y)}-\frac{1}{\rho^S} \sqrt{\varphi(y)-\gamma b(y,\Theta)}$. The optimal strategy for this choice is $\displaystyle \Pi^*_3(y)=\frac{1}{\gamma\sigma(y)}\sqrt{\varphi(y)-\gamma b(y,\Theta)}$. Interestingly, in this case the optimal strategy depends on uncovered claims and the market volatility but it is not affected by the Sharpe ratio and the insurance and reinsurance premia. The insurance company with this utility adjusts its preferences according to the state of the financial market and the premia that it receives and pays;
	\item[(iv)] $\displaystyle h_4(t,x,y)=-\bar{k}\frac{1}{\rho^S}\gamma \sigma(y) \bar{k}x+\frac{\mu(y)}{\rho^S \sigma(y)}$. Taking $h_4$ as an affine function of the wealth implies that the optimal strategy is of the form $\Pi^*_4(x)=\bar{k}x$, i.e. the insurance company would always invest in the risky asset the same percentage of the wealth. In particular, if $\bar{k}>1$ the insurance company would always borrow from the bank account and if $\bar{k}<0$ it always short-sells the risky asset.
\end{itemize}

In Figure \ref{ois}, we plot the optimal portfolio strategies corresponding to the first three choices of the function $h$, with respect to the value of the index. Taking, for example, $\beta_{{\tiny \Gamma}}=2$,  $\rho^S=0.5$ and $\bar{k}=0.5$, we consider two different parameter settings for the appreciation rate and the volatility of the stock. In the left panel we consider the case where the stock price volatility is highly affected by the index and the drift is almost constant. In this case the strategy $\Pi_3^*$ is less affected by the variation of the index and $\Pi_2^*$ is the most affected one.
In the middle panel, instead, the effect of the index on the drift dominates the effect on volatility. Here, we see that strategies are increasing for all $y$ in the range $[-0.3, 0.3]$.  $\Pi_2^*$ still assumes a wider range of values, hence it is the most affected by the variation of the index.
In the right panel we plot the strategies where both the drift and the volatility highly depend on the values of the index. Strategies $\Pi_1^*$ and $\Pi^*_2$ are not monotone, and $\Pi^*_2$ has the largest variation as for the other two cases.      %, as shown in
\begin{figure}[ht]
	\includegraphics[width=.3\textwidth]{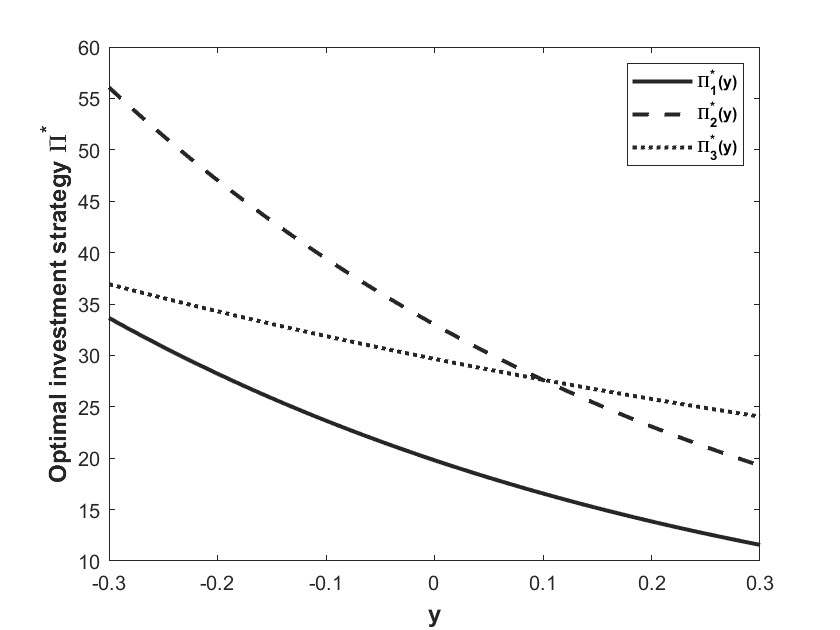}
	\includegraphics[width=.3\textwidth]{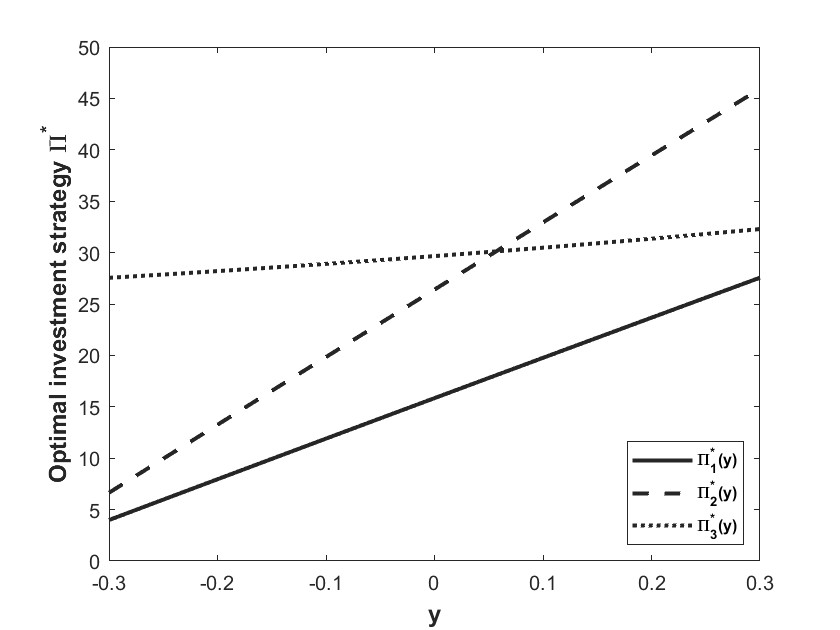}
	\includegraphics[width=.3\textwidth]{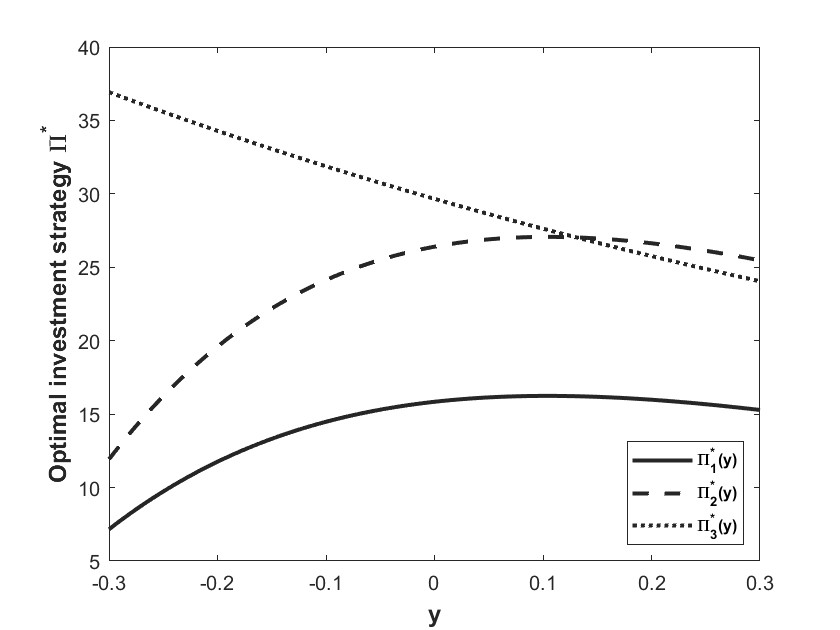}
	\caption{\label{ois} Optimal portfolio strategies with respect to the values of the index. under different forward preferences. Parameter settings: Left panel $\mu \in [0.094, 0.106]$, $\sigma\in [0.0748, 0.1354]$; Middle panel $\mu \in [0.02, 0.14]$, $\sigma\in [0.1002, 0.1008]$; Right panel $\mu \in [0.02, 0.14]$, $\sigma\in [0.0748, 0.1354]$}
\end{figure}

%\subsection{The effect of stochastic factor on the optimal portfolio}

\subsection{CCE and comparison with backward investment}\label{example_comparison}
In this section we perform a comparison analysis between optimal strategies under forward and backward utilities. To simplify the framework we consider an example where the backward value function as well as the optimal strategy are characterized in closed form. Precisely, we employ the expected value principle to calculate the insurance and reinsurance premia, constant volatility for the stock price and  quadratic claim arrival intensity $\lambda(y)=\lambda_0(1+y+\frac{1}{2}y^2)$, so that it stays positive  \footnote{This choice of the function $\lambda$ corresponds to the second order approximation of $\lambda(y)=\lambda_0 e^{y}$, where $\lambda_0$ is a nonnegative constant, which is taken in the previous numerical experiments, and it is made so to obtain a unique classical solution of the HJB for the optimization problem under static backward utility.}.
Now, we suppose that claims follow a Gamma distribution $\Gamma(\alpha_{{\tiny \Gamma}},\beta_{{\tiny \Gamma}})$, where $\alpha_{{\tiny \Gamma}}=\beta_{{\tiny \Gamma}}=1$ (i.e. exponential distribution with mean $1$).

Moreover, we assume that the insurance and reinsurance premia are calculated under the expected value principle, that is
\begin{align}
	a(t,Y_t) &= (1+\delta_I) \esp{Z_1} \lambda(t, Y_t)\\
	b(t, Y_t, \Theta) &= (1+\delta_R) \lambda(t, Y_t) \esp{Z_1}\Theta_t.
\end{align}
For simplicity we denote $a=(1+\delta_I) \esp{Z_1}$ and $b^{(\Theta)}=(1+\delta_R) \esp{Z_1} \Theta{_t}$ and $\esp{Z_1}=\int_Iz  F(\ud z)$, where the insurance safety loading and the reinsurance safety loading now are set as $\delta_I=0.4$ and $\delta_R=0.7$, respectively.
We notice that in this case the optimal reinsurance strategy is given by $\Theta^{*,B}=\min\{1, \bar\Theta\}$ where $\bar \Theta$ is the unique solution of the equation $(1+\delta_R) \esp{Z_1}=\int_I ze^{\gamma z \Theta}  F(\ud z)$. Clearly, $\Theta^{*,B}$ does not depend on the stochastic factor $Y$.
We suppose that the dynamic of $S$ is given by
\begin{align}
	\ud S_t&=S_t (\mu_1+\mu_2Y_t)\ud t+ S_t \sigma \ud W^S_t, \quad S_0=1,
\end{align}
where $\sigma=\bar{c}\sqrt{\epsilon_1+1}$ (that corresponds to the Scott volatility above taking $\epsilon_2=0$). As for the other parameters, we take $\mu_1=0.08$, $\mu_2=0.2$, $\bar{c}=0.27$ and $\epsilon_1=0.01$.
We also denote $c(\Theta)=\int_I\big(e^{\gamma (1-\Theta )z}-1\big) F(\ud z)$.
In this case the function $g(t,y)$ is a quadratic function given by
\begin{align}
	&g(t,y)=-\frac{\mu_1^2}{2\sigma^2}-\big(\gamma a-\gamma b(\Theta^{*,B})-c(\Theta^{*,B})\big)\lambda_0(t) \\ &-\!\left(\frac{\mu_1\mu_2}{\sigma^2}
	\!+\!\big(\gamma a-\gamma b(\Theta^{*,B})-c(\Theta^{*,B})\big)\lambda_0(t)\right)y\!-\!\left(\frac{\mu_2^2}{2\sigma^2}\!+\!\big(\gamma a-\gamma b(\Theta^{*,B})-c(\Theta^{*,B})\big)\frac{\lambda_0(t)}{2}\right)y^2.
\end{align}
To solve the PDE \eqref{eq:htilde} we consider the following guess function: $$\phi(t,y;T)=\phi^{(0)}(t)+\phi^{(1)}(t)y+\phi^{(2)}(t) y^2.$$ Then, the optimal investment strategy becomes $\Pi^{*,B}(t,y)=\frac{\mu_1+\mu_2y}{\gamma \sigma^2}+\rho \frac{\beta(t) (\phi^{(1)}(t)+2\phi^{(2)}(t)y)}{\gamma \sigma}$.
Plugging the guess function in  \eqref{eq:htilde} and collecting the coefficients of $y$, $y^2$ and the constant term leads to the following system of ODEs
\begin{align}
	\pd{ \phi^{(0)}}{t}(t)=& - \phi^{(1)}(t) \big(\alpha_1(t)-\rho\frac{\mu_1\beta(t)}{\sigma}\big)-\frac{1}{2}\big(\phi^{(1)}(t)\big)^2(1-\rho^2)\beta^2(t)-\phi^{(2)}(t) \beta^2(t)\\
	&+\frac{\mu_1^2}{2\sigma^2}+\big(\gamma a-\gamma b(\Theta^{*,B})-c(\Theta^{*,B})\big)\lambda_0(t)\\
	\pd{ \phi^{(1)}}{t}(t)=&-\phi^{(1)}(t) \big(\alpha_2(t)-\rho\frac{\mu_2\beta(t)}{\sigma}\big)-2\phi^{(2)}(t)\big(\alpha_1(t)-\rho\frac{\mu_1\beta(t)}{\sigma^2}\big) \\&-2\phi^{(1)}(t)\phi^{(2)}(t) (1-\rho^2) \beta^2(t)+\frac{\mu_1\mu_2}{\sigma^2}
	+\big(\gamma a-\gamma b(\Theta^{*,B})-c(\Theta^{*,B})\big)\lambda_0(t)\\
	\pd{ \phi^{(2)}}{t}(t)=&-2\phi^{(2)}(t) \big(\alpha_2(t)-\rho\frac{\mu_2\beta(t)}{\sigma}\big)-2(\phi^{(2)})^2(1-\rho^2)\beta^2(t) \\
	&+\frac{\mu_2^2}{2\sigma^2}+\big(\gamma a-\gamma b(\Theta^{*,B})-c(\Theta^{*,B})\big)\frac{\lambda_0(t)}{2},
\end{align}
with the final conditions $\phi^{(0)}(T)=\phi^{(1)}(T)=\phi^{(2)}(T)=0$.
Regularity of the coefficients guarantees that a solution exists.

To highlight the impact of the stochastic factor on the optimal investment portfolio, we plot in Figure \ref{piy} the optimal investment strategies corresponding to forward and backward utilities as functions of the stochastic factor.

\begin{figure}
	\includegraphics[width=6cm,height=6cm]{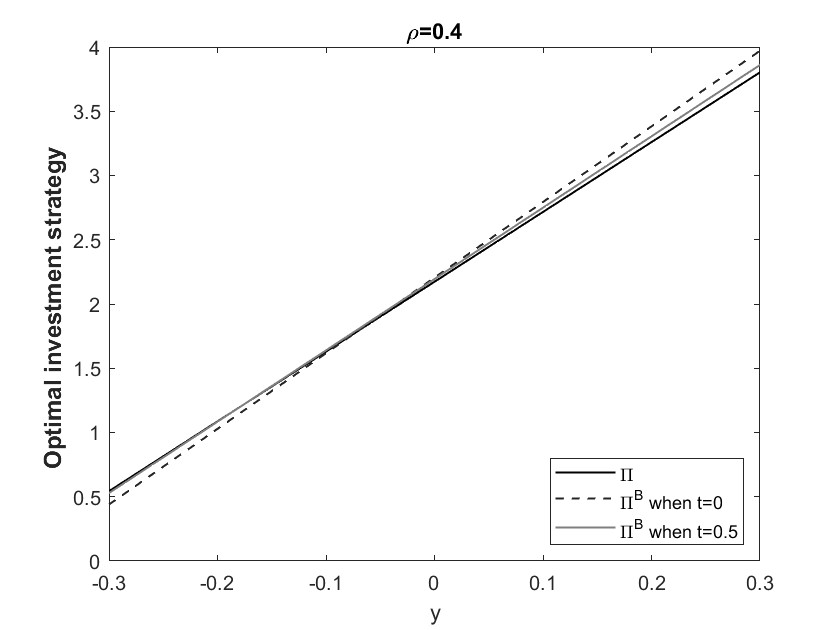}
	\includegraphics[width=6cm,height=6cm]{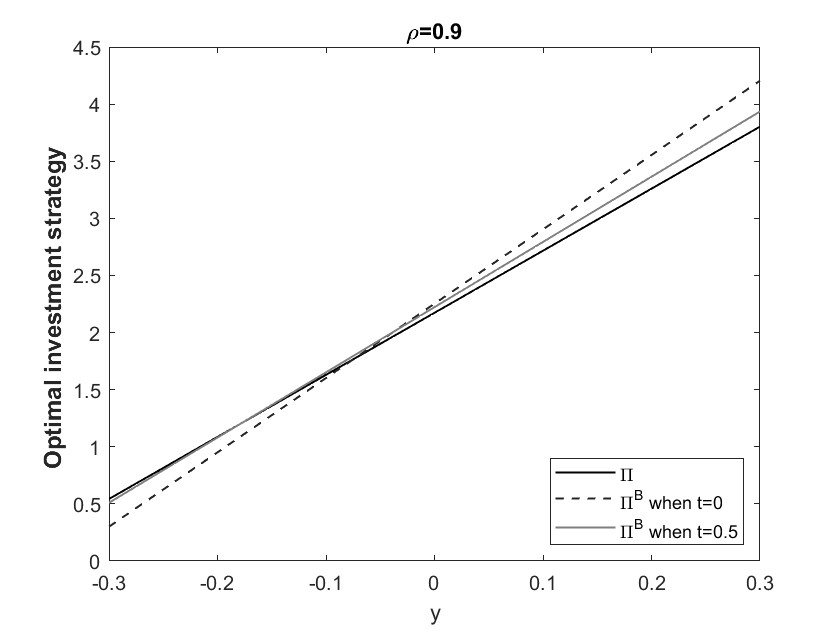}
	\caption{\label{piy} The optimal investment strategy when $\rho=0.4$ (left panel) $\rho=0.9$ (right panel), as functions of the stochastic factor. Solid (resp. dashed lines) line corresponds to the optimal portfolio under Forward (resp. Backward) utility. }
\end{figure}

%Figure \ref{piy} depicts the marginal impact on the optimal investment policy (both under forward and backward utilities) of the stochastic factor, for two different values of the correlation coefficient.
We notice that the backward optimal portfolio is more sensitive to any variation of the stochastic factor with respect to the forward one, and this is amplified when the correlation coefficient is large. Moreover, we observe that this effect flattens out as maturity approaches. %: indeed, the big range of values at the beginning of the trading period shrinks a lot after $6$ months.
%Regarding the two different approaches,
It is also clear that backward strategy gets closer and closer to the forward one, as the correlation coefficient approaches to zero, and they actually coincide when $\rho=0$. The difference between the optimal strategy under the forward utility and the optimal strategy under the backward utility decreases with time: indeed, as time to maturity reduces, also the estimates of future risk in the backward case has a smaller impact on the value function and hence on the optimal strategy that gets closer and closer to the myopic component.

%this is a consequence of the fact that when maturity approaches, as the backward utility

%the end of trading, the insurer has nothing to estimate, even in the case her preferences are of classical backward type.

We conclude with a brief analysis of the CCE. % in order to outline some of its features.
Figure \ref{RAt} provides a trajectory of the process $R_t:=\esp{X_T^H|\F_t}-C_t(X_T^H)$, for $t \in [0,T]$, which expresses the risk aversion of the company during the time interval. % shows that this process is mainly affected by big claims and that
This process is decreasing and disappears at maturity.
\begin{figure}[ht]
	\includegraphics[width=8cm]{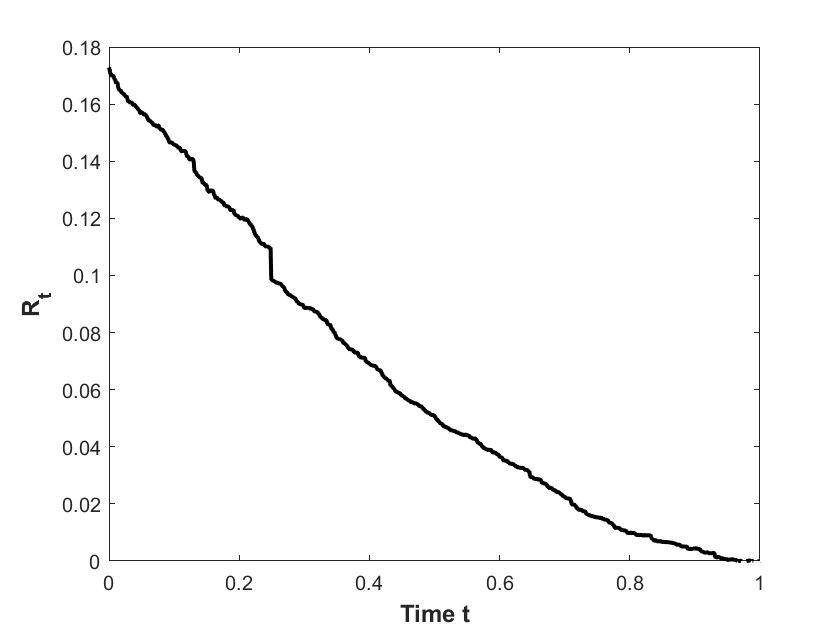}
	\caption{\label{RAt} The risk aversion of the company with respect to time.}
\end{figure}
Next we compare the CCE for the forward and the backward utility preferences.
To make the presentation more clear, we fix $\rho=0$.
Under the parameter setting that has been fixed in this section, it holds that $C_t(X_T^H)>\widetilde C_t(X_T^H),$ $\P-$a.s., due to the fact that $P(T)-P(t) >0,$ $\P-$a.s. (see Proposition \ref{iff}). This is confirmed in Figure \ref{CEF_CEB}. %Indeed, under these parameters, , as  states and also \eqref{eq:cond_ii} holds true, obviously.
\begin{figure}[ht]
	\includegraphics[width=8cm]{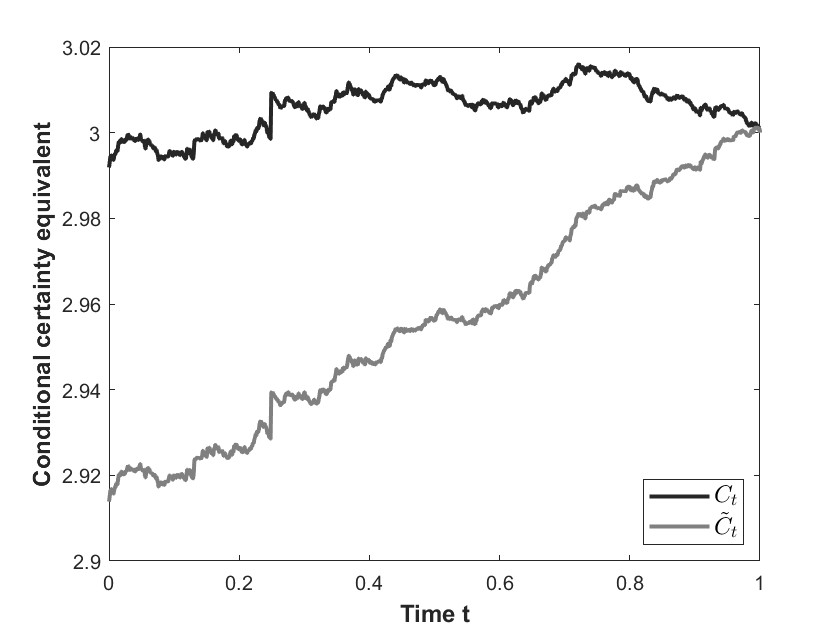}
	\caption{\label{CEF_CEB} CCEs under forward exponential dynamic utility and backward exponential static utility, with $\beta_{{\tiny \Gamma}}=1$ and $\bar{c}=0.27$.}
\end{figure}
For comparison purposes we also analyse the case of a smaller claim size and a smaller stock volatility, taking $\beta_{{\tiny \Gamma}}=1/3$ and $\bar{c}=0.1$. Figure \ref{CEB_CEF} shows that the relationship between the CCEs reverts and we get that  $C_t(X_T^H)<\widetilde C_t(X_T^H),$ $\P-$a.s..
\begin{figure}[ht]
	\includegraphics[width=8cm]{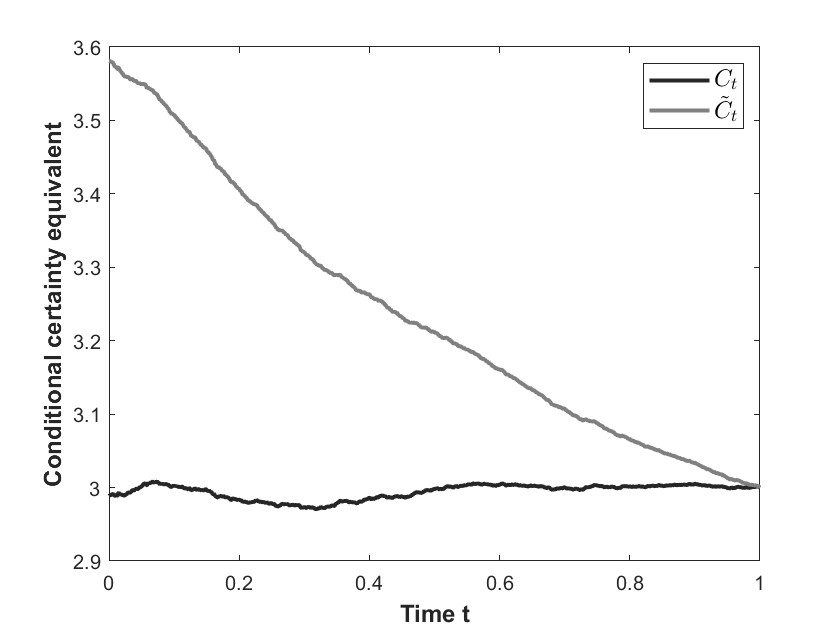}
	\caption{\label{CEB_CEF} CCEs under forward exponential dynamic utility and backward exponential static utility, with $\beta_{{\tiny \Gamma}}=1/3$ and $\bar{c}=0.1$.}
\end{figure}
In this case, the company does not buy reinsurance and instead invests a large part of its wealth in the risky asset. Consequently, a lower flexibility of the backward utility results into widely changes of the CCE in the backward case, which is also decreasing over the time interval.

\begin{center}
	{\bf Acknowledgments}
\end{center}
The authors are member of INdAM-Gnampa and their work has been partially supported through the Project U-UFMBAZ-2020-000791.

\appendix

\section{Assumptions and technical results}\label{sec:tech_res}

To solve the optimization problem \eqref{vfass}, we use the Markov property of the processes $(X^H,Y)$ and $(X^H,Y,P)$. %It is easy to show that both $(X^H,Y)$ and $(X^H,Y,P)$ are Markov processes.
Now, we compute, for every constant strategy $H=(\Theta,\Pi)\in [0,1] \times \R$, their infinitesimal generators that are useful in our paper.
%\begin{notation}

We denote by $C^{1,2}$ the set of functions $f(t,x_1,...,x_n)$, which are differentiable with respect to time and twice differentiable with respect to variables $x_1,...,x_n$. % th bounded first-order derivatives %$\frac{\partial f}{\partial t}$, $\frac{\partial f}{\partial x_1}, ..., \frac{\partial f}{\partial x_n} $
%and bounded second-order derivatives with respect to the spatial variables $x_1,...,x_n$. %, i.e. $\frac{\partial^2 f}{\partial x_1^2}, ..., \frac{\partial^2 f}{\partial x_n^2}$.
%\end{notation}

We let $\tilde \L^H$ be the infinitesimal generator of the pair $(X^H,Y)$ which satisfies
	\begin{align}
			& \tilde \L^H f(t,x,y)\\
			 &\quad =\pd{f}{t}(t,x,y) + \big[ a(t,y) - b(t,y,\Theta) + \Pi\mu(t,y)\big]\pd{f}{x}(t,x,y) \\
			 &\quad + \frac{1}{2}\Pi^{2} \sigma^2(t,y) \pds{f}{x}(t,x,y) + \alpha(t,y)\pd{f}{y}(t,x,y) + \frac{1}{2}\beta^2(t,y)\pds{f}{y}(t,x,y) \\
			 & \quad + \rho \Pi \sigma(t,y) \beta(t,y)\pdsm{f}{x}{y}(t,x,y) \\
			 & \quad + \lambda(t,y) \int_I \Big{\{}f\big(t,x-(1-\Theta)z,y\big)-f(t,x,y)\Big{\}}  F(t,y,\ud z), \label{mgenxy}
		\end{align}
for every $(t,x,y) \in [0,+ \infty) \times \R^2 $ and for every function $f:[0,+ \infty) \times \R^2 \to \R$ in $C^{1,2}$ which is sufficiently integrable.  %in $C_b^{1,2}$.

Next, denote by $\L^H$ the infinitesimal generator of the triplet $(X^H,Y,P)$ which is given by
	\begin{align}
	& \L^H f(t,x,y,p)\\
	&\quad =\pd{f}{t}(t,x,y,p) + \big[ a(t,y) - b(t,y,\Theta) + \Pi\mu(t,y)\big]\pd{f}{x}(t,x,y,p) \\
	&\quad + \frac{1}{2}\Pi^{2} \sigma^2(t,y) \pds{f}{x}(t,x,y,p) + \alpha(t,y)\pd{f}{y}(t,x,y,p) + \frac{1}{2}\beta^2(t,y)\pds{f}{y}(t,x,y,p) \\
	& \quad + \rho \Pi \sigma(t,y) \beta(t,y)\pdsm{f}{x}{y}(t,x,y,p) + g(t,x,y)\pd{f}{p}(t,x,y,p) + \frac{1}{2}h^2(t,x,y)\pds{f}{p}(t,x,y,p) \\
	& \quad + \rho^S \Pi \sigma(t,y) h(t,x,y)\pdsm{f}{x}{p}(t,x,y,p) + \rho^Y \beta(t,y) h(t,x,y)\pdsm{f}{y}{p}(t,x,y,p)\\
	& \quad + \lambda(t,y) \int_I \Big{\{}f\big(t,x-(1-\Theta)z,y,p\big)-f(t,x,y,p)\Big{\}}  F(t,y,\ud z), \label{mgenxyp}
\end{align} for every $(t,x,y,p) \in [0,+ \infty) \times \R^3 $ and for every function $f:[0,+ \infty) \times \R^3 \to \R$ in $C^{1,2}$ which is sufficiently integrable. % in $C_b^{1,2}$.

\subsection{The verification theorem}\label{app:verification}

Here, we prove a general verification result, which ensures that the function $u(t,x,y,p)$ defined in \eqref{ansatz} is the unique solution of the optimization problem \eqref{vfass}.

\begin{theorem}[Verification Theorem]\label{thver}
	Let
	%$s=0$ be the normalization point and
	$T\ge 0$ and let $\bar{u}: [0,T] \times \R^3 \to (-\infty,0)$ be a classical solution of the final value problem \eqref{HJBu}-\eqref{HJBufinal}\footnote{We recall that $\bar{u}(t,x,y,p)$ is a classical solution to \eqref{HJBu}-\eqref{HJBufinal} if the function $\bar{u}(t,x,y,p)$ solves \eqref{HJBu}-\eqref{HJBufinal} and it is continuous in $(t,x,y,p)$ and twice continuously differentiable with respect to $x,y,p$.}, which satisfies
	\begin{itemize}
		\item [\rm{(i)}] $\displaystyle{\esp {\int_{0}^{T} \Big( \sigma(r,Y_{r})\Pi_{r} \pd{\bar{u}}{x}(r,X_{r}^H,Y_r,P_r) \Big)^{2} \ud r }< \infty}$,
		\item [\rm{(ii)}] $\displaystyle{ \esp{ \int_{0}^{T} \Big( \beta(r,Y_r) \pd{\bar{u}}{y}(r,X_{r}^H,Y_r,P_r) \Big)^{2} \ud r }< \infty}$,
		\item [\rm{(iii)}] $\displaystyle{ \esp {\int_{0}^{T} \Big( h(r,X^H_r,Y_r) \pd{\bar{u}}{p}(r,X_{r}^H,Y_r,P_r) \Big)^{2} \ud r} < \infty}$,
		\item [\rm{(iv)}] $\displaystyle{  \esp{\int_{0}^{T} \! \lambda(r,Y_{r}) \int_I \Big{|} \bar{u}\big{(}r,X_{r-}^H-(1-\Theta_{r-})z,Y_r,P_{r-} \big{)} - \bar{u}(r,X_{r-}^H,Y_r,P_{r-}) \Big{|}  F(r,Y_{r},\ud z) \ud r} < \infty}$.
	\end{itemize}
	\begin{itemize}
		\item [$(a)$] Hence, $u(t,x,y,p) \le \bar{u}(t,x,y,p)$, for every admissible control $H \in \A$ and for every $(t,x,y,p) \in [0,T] \times \R^3$.
		\item [$(b)$] Moreover, if $\bar{u}(T,x,y,p)=u(T,x,y,p)$, for every $(x,y,p) \in \R^3$ and there exists $H^* \in \A$ such that $\L^{H^*}\bar{u}(t,x,y,p)=0$, for every $(t,x,y,p) \in [0,T) \times \R^3$, then $u=\bar{u}$ in $[0,T] \times \R^3$.
	\end{itemize}
\end{theorem}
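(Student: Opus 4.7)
\textbf{Proof plan for Theorem \ref{thver}.}

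The plan is to apply It\^o's formula to $\bar u(s, X_s^H, Y_s, P_s)$ along an arbitrary admissible control $H \in \A$ and exploit the HJB inequality $\L^H \bar u \le 0$. To handle the jump term from the random measure $m(\ud s, \ud z)$ and the diffusion terms in $W^S$, $W^Y$, $W^P$ cleanly, I would first introduce a localizing sequence of stopping times $\{\tau_n\}_{n\in \bN}$, e.g. $\tau_n = \inf\{s \ge t : |X_s^H| + |Y_s| + |P_s| > n\} \wedge T$, and work on the stochastic interval $\llbracket t, \tau_n \rrbracket$ where all the quantities are bounded and the stochastic integrals are genuine martingales.

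For part (a), fix $(t,x,y,p) \in [0,T] \times \R^3$ and $H \in \A$. Applying It\^o's formula to $\bar u(s, X_s^H, Y_s, P_s)$ between $t$ and $\tau_n$, and using the explicit form of $\L^H$ in \eqref{mgenxyp} together with the compensator $\nu(\ud s, \ud z) = F(s, Y_s, \ud z)\lambda(s, Y_s)\ud s$, I would obtain
\begin{equation*}
\bar u(\tau_n, X_{\tau_n}^H, Y_{\tau_n}, P_{\tau_n}) = \bar u(t,x,y,p) + \int_t^{\tau_n} \L^H \bar u(s, X_s^H, Y_s, P_s) \ud s + M_{\tau_n},
\end{equation*}
where $M$ denotes the sum of the four local martingale terms (the $W^S$, $W^Y$, $W^P$ integrals and the compensated jump integral). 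By \eqref{HJBu}, for the control $H$ we have $\L^H \bar u \le 0$, so taking $\mathbb{E}_{t,x,y,p}$ of both sides and using that $M$ stopped at $\tau_n$ has zero expectation yields $\mathbb{E}_{t,x,y,p}[\bar u(\tau_n, X_{\tau_n}^H, Y_{\tau_n}, P_{\tau_n})] \le \bar u(t,x,y,p)$. The core obstacle then is the passage to the limit $n \to +\infty$: the integrability conditions (i)--(iv) are precisely designed so that the stochastic integrals up to $T$ have finite second moments (for the Brownian parts) and finite compensator mass (for the jump part), which combined with admissibility $\mathbb{E}[e^{-\gamma X_T^H - P_T}] < \infty$ allows a dominated convergence argument on the left-hand side. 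Using the terminal condition $\bar u(T, X_T^H, Y_T, P_T) = -e^{-\gamma X_T^H - P_T}$ gives $\mathbb{E}_{t,x,y,p}[-e^{-\gamma X_T^H - P_T}] \le \bar u(t,x,y,p)$, and taking the supremum over $H \in \A_t$ yields $u \le \bar u$.

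For part (b), I would repeat the same It\^o computation with the candidate optimizer $H^* \in \A$. Since $\L^{H^*} \bar u = 0$ by hypothesis, every inequality in the previous step becomes an equality, so $\mathbb{E}_{t,x,y,p}[-e^{-\gamma X_T^{H^*} - P_T}] = \bar u(t,x,y,p)$. Combined with admissibility of $H^*$ and the definition \eqref{vfass}, this forces $u(t,x,y,p) \ge \bar u(t,x,y,p)$, which together with part (a) gives $u = \bar u$ on $[0,T] \times \R^3$.

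The main obstacle is the passage to the limit in the localization: one must carefully argue that $\bar u(\tau_n, X_{\tau_n}^H, Y_{\tau_n}, P_{\tau_n}) \to \bar u(T, X_T^H, Y_T, P_T)$ in $L^1$, and simultaneously that the four stopped stochastic integrals remain true (zero-mean) martingales. This is precisely the role of the \emph{a priori} integrability conditions (i)--(iv): (i)--(iii) provide $L^2$-bounds on the Brownian integrands (ensuring the martingale property up to $T$), while (iv) controls the $L^1$-norm of the jump compensator, ruling out contributions from accumulated claims near $T$. Condition (ii) is needed even though the candidate $u(x,p)$ does not depend on $y$, because the verification theorem is stated for a general classical solution $\bar u$; for the specific ansatz \eqref{ansatz} it becomes trivial, which is exactly what is used in the proof of Theorem \ref{familyfeu}.
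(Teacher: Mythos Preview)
Your overall strategy is correct and aligns with the paper's proof: apply It\^o's formula, use the HJB inequality $\L^H\bar u\le 0$, and identify the stochastic integral terms as a martingale so that taking conditional expectations yields the desired inequality (with equality along $H^*$).

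The one genuine difference is your use of localization. The paper does \emph{not} localize in the proof of the verification theorem: conditions (i)--(iv) are precisely the hypotheses that make each of the four stochastic integrals a \emph{true} martingale on $[0,T]$ (square-integrable integrands for the three Brownian integrals, and $L^1$-compensator for the jump integral), so one may take the conditional expectation of the It\^o decomposition on $[t,T]$ directly, with no stopping and no limit. Your route is not wrong, but it creates the very obstacle you flag---the $L^1$ passage $\bar u(\tau_n,X^H_{\tau_n},Y_{\tau_n},P_{\tau_n})\to \bar u(T,X^H_T,Y_T,P_T)$---which for a general classical solution $\bar u$ is not obviously covered by dominated convergence (you would need an integrable envelope for $|\bar u|$ along the trajectory, and admissibility only controls the terminal value). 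The paper's more direct argument sidesteps this entirely. Incidentally, the localization you may have in mind is the one used \emph{later}, in the proof of Theorem~\ref{familyfeu}, where the specific ansatz $u(x,p)=-e^{-\gamma x-p}$ is checked against (i)--(iv) on stochastic intervals $\llbracket t,T\wedge\tau_n\rrbracket$; that is a different step from the verification theorem itself.
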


\begin{proof}
	Let $H \in \A$ be an admissible control. Using equations \eqref{wealth} and \eqref{y} and applying It\^{o}'s formula to $\bar{u}(t,X_t^H,Y_t,P_t)$, we have that
	\begin{align}
		& \bar{u}(T,X_{T}^H,Y_{T},P_T) = \bar{u}(t,x,y,p) + \int_{t}^{T} \L^H \bar{u}(r,X_r^H,Y_r,P_r) \ud r \\ & \qquad + \int_{t}^{T} \Pi_{r}\sigma(r,Y_{r})\pd{\bar{u}}{x}(r,X_{r}^H,Y_{r},P_r)  \ud W_{r}^S + \int_{t}^{T} \beta(r,Y_{r})\pd{\bar{u}}{y}(r,X_{r}^H,Y_{r},P_r)  \ud W_{r}^Y \\
		&  \qquad + \int_{t}^{T}h(r,X_r^H,Y_r) \pd{\bar{u}}{p}(r,X_{r}^H,Y_{r},P_{r})  \ud W_{r}^P \\
		&  \qquad+\int_{t}^{T}\int_{I} \left(\bar{u} \big(r,X_{r-}^H -(1-\Theta_{r-})z,Y_r,P_{r-} \big) - \bar{u}(r,X_{r-}^H,Y_r,P_{r-}) \right)\\
		&  \qquad \quad  \times (m(\ud r, \ud z)- \lambda(r,Y_r)  F(r,Y_r,\ud z) \ud r),\label{itoV}
	\end{align}
	where $\mathcal{L}^H$ is introduced in \eqref{mgenxyp}. Let $M=\{M_t,\ t \in [0,T]\}$ be the stochastic process given by
	\begin{align}
		M_t & = \int_{0}^{t}\Pi_{r}\sigma(r,Y_{r})\pd{\bar u}{x}(r,X_{r}^H,Y_{r},P_r)  \ud W_{r}^S + \int_{0}^{t}\beta(r,Y_{r})\pd{\bar u}{y}(r,X_{r}^H,Y_{r},P_r)  \ud W_{r}^{Y} \\ & + \int_{t}^{T}h(r,X_r^H,Y_r) \pd{\bar{u}}{p}(r,X_{r}^H,Y_{r},P_{r})  \ud W_{r}^P \\ &  + \int_{0}^{t}\!\int_{I} \!\! \left(\bar{u} \big(r,X_{r-}^H\! -(1-\Theta_{r-})z,Y_r,P_{r-} \big) - \bar{u}(r,X_{r-}^H,Y_r,P_{r-}) \right) (m(\ud r, \ud z)- \lambda(r,Y_{r})  F(r,Y_r,\ud z)\ud r) 	\label{MG}
	\end{align} and observe that integrability conditions (i), (ii), (iii), (iv) ensure that the process $M$ is a martingale.
	Now, since $\bar{u}$ solves equation  \eqref{HJBu} with final condition \eqref{HJBufinal}, we get
	\begin{align}\label{itoV2}
		& \bar{u}(T,X_{T}^H,Y_{T},P_T) \leq \bar{u}(t,x,y,p)  + \int_{t}^{T} \Pi_{r}\sigma(r,Y_{r})\pd{\bar{u}}{x}(r,X_{r}^H,Y_{r},P_r)  \ud W_{r}^S \\ & \quad + \int_{t}^{T} \beta(r,Y_{r})\pd{\bar{u}}{y}(r,X_{r}^H,Y_{r},P_r) \ud W_{r}^Y + \int_{t}^{T}h(r,X_r^H,Y_r) \pd{\bar{u}}{p}(r,X_{r}^H,Y_{r},P_{r})  \ud W_{r}^P  \\
		&  \quad +\int_{t}^{T}\int_{I} \left(\bar{u} \big(r,X_{r-}^H -(1-\Theta_{r-})z,Y_r,P_{r-} \big) - \bar{u}(r,X_{r-}^H,Y_r,P_{r-}) \right)(m(\ud r, \ud z)- \lambda(r,Y_r)  F(r,Y_r,\ud z) \ud r),
	\end{align}
	for every $H \in \A$.
	Thus, taking the conditional expectation with respect to $X_t^H=x$, $Y_t=y$ and $P_t=p$ on both sides of inequality \eqref{itoV2}, leads to \begin{equation}
		\mathbb{E}_{t,x,y,p} \Big[ \bar{u}(T,X_{T}^H,Y_{T},P_T) \Big] \le \bar{u}(t,x,y,p).
	\end{equation} By the final condition in equation \eqref{HJBufinal}, we obtain \begin{equation}
		\mathbb{E}_{t,x,y,p} \Big[ -e^{-\gamma X_T^H - (P_T-P_t)} \Big] \le \bar{u}(t,x,y,p),
	\end{equation} for every $H \in \A$. Hence, $u(t,x,y,p) \le \bar{u}(t,x,y,p)$, as we wanted. Finally, we observe that if $H \in \A$ is the maximizer in equation \eqref{HJBu} with final condition \eqref{HJBufinal}, then the inequality above becomes an equality, and we obtain statement $(b)$, which concludes the proof.
\end{proof}

\subsection{Proof of Proposition \ref{prop:optimal}} \label{proof_optimal}

We consider the optimization problem defined by \eqref{HJBu}--\eqref{HJBufinal}. Using the form of the function $u$, we observe that the problem can be written as
\begin{align}\label{HJBp}
&-\gamma a(t,y)u(x,p) -  g(t,x,y)u(x,p)  + \frac{1}{2}h^2(t,x,y)u(x,p) \\ &+\max_{\Theta \in [0,1]}\Psi_1(\Theta,t,x,y,p)  + \max_{\Pi \in \R} \Psi_2(\Pi,t,x,y,p)  =0,
\end{align}
for all $(t,x,y,p) \in [0,+\infty) \times \R^3$ with the final condition $u(T,x,y,p)=-e^{-\gamma x-p},$ for all $(x,y,p) \in \R^3$, where the functions $\Psi_1$, $\Psi_2$ are defined as
\begin{align}
\Psi_1(\Theta,t,x,y,p) & = \gamma b(t,y,\Theta) u(x,p) + \lambda(t,y) \int_I u(x,p)(e^{\gamma (1-\Theta)z}-1)  F(t,y,\ud z),\label{psid}\\
\Psi_2(\Pi,t,x,y,p) & = -\gamma \Pi \mu(t,y) u(x,p) + \frac{1}{2} \gamma^2 \Pi^{2}\sigma^2(t,y) u(x,p) + \gamma \rho^S \Pi \sigma(t,y) h(t,x,y) u(x,p) . \label{psipi}
\end{align}
Now, we compute the optimal protection level $\Theta^*$. The function $\Psi_1$ is continuous in $\Theta$, due to the assumptions on the function $b(t,y,\Theta)$, and $\Theta\in [0,1]$, therefore a maximum exists. The first and second order derivatives of $\Psi_1$ are respectively given by
\begin{equation}
\pd{\Psi_1}{\Theta}(\Theta,t,x,y,p)= -\gamma u(x,p) \left\{  \pd{b}{\Theta}(t,y,\Theta) - \lambda(t,y) \int_I e^{\gamma (1-\Theta)z}z   F(t,y,\ud z) \right\},
\end{equation}
\begin{equation}
\pds{\Psi_1}{\Theta}(\Theta,t,x,y,p)= - \gamma  u(x,p) \left\{ \pds{b}{\Theta}(t,y,\Theta) + \gamma  \lambda(t,y) \int_I e^{\gamma (1-\Theta)z}z^2  F(t,y,\ud z)\right\},
\end{equation}
and they are continuous in $\Theta$ and well defined thanks to Assumptions  \ref{assZint} and \ref{ipass}. %and the compactness of $I$.
In virtue of condition \eqref{condconc}, $\Psi_1(\Theta,t,x,y,p)$ is strictly concave in $\Theta$ and hence it admits a unique maximizer $\Theta^{*} \in [0,1]$, whose measurability follows by classical selection theorems.
Let  $\widehat \Theta$ be the solution of the equation
$\pd{\Psi_1}{\Theta}(\Theta, t,x,y,p)=0$. If $\widehat \Theta\in (0,1)$, then $\widehat \Theta$ provides the optimal retention level; if $\widehat \Theta\ge 1$, then the optimal retention level is $1$, which means full reinsurance is optimal; finally, if $\widehat \Theta\le 0$, then the optimal retention level is $0$, that is no reinsurance.
Next, we  describe the sets corresponding to these three cases. Recall the definition of sets $\D_0$ and $\D_1$ in definitions \eqref{d0} and \eqref{d1}, respectively.  From \eqref{condconc}, we get that $\pd{\Psi_1}{\Theta}(\Theta, t,x,y)$ is decreasing in $\Theta \in [0,1]$, for every $(t,x,y,p) \in [0,+\infty) \times \R^3$, that is, $\pd{\Psi_1}{\Theta}(1, t,x,y,p)\le \pd{\Psi_1}{\Theta}(\Theta, t,x,y,p)\le \pd{\Psi_1}{\Theta}(0, t,x,y,p)$. We have:
\begin{itemize}
\item[(i)] if $\pd{\Psi_1}{\Theta}(0,t,x,y,p)\le 0$, then $\Theta^*(t,x,y,p)=0$, i.e. no reinsurance is chosen. This is equivalent to say that $(t,y) \in \D_0$.
\item[(ii)] if $\pd{\Psi_1}{\Theta}(1, t,x,y,p)\ge 0$, then $\Theta^*(t,x,y,p)=1$, i.e. full reinsurance is chosen. This corresponds to the case $(t,y)\in \D_1$.
\item[(iii)] the case $\pd{\Psi_1}{\Theta}(\widehat\Theta, t,x,y,p)=0$ for some $\widehat\Theta\in (0,1)$, corresponds to $(t,y)\in (\D_0\cup \D_1)^c$.
\end{itemize}

To characterize the candidate for the optimal investment portfolio $\Pi^*$, we observe that the function $\Psi_2(\Pi,t,x,y,p)$ is continuous in $\Pi$. Then, taking the first order condition we get that $\Pi^*$ given in equation \eqref{opti1} is a stationary point of the function $\Psi_2(\Pi,t,x,y,p)$, which corresponds to a maximum since the second derivative with respect to $\Pi$ is negative.

Finally, we show that the pair $(\Pi^*, \Theta^*)\in \mathcal A$, since all required integrability conditions are satisfied.  Both $\Theta^*$ and $\Pi^*$ are predictable;
moreover, for every $t \ge 0$ it holds that
\begin{align}
&\esp{\int_{0}^t\left(|\Pi_s^*||\mu(s,Y_s)| +(\Pi_s^*)^2 \sigma^2(s,Y_s)\right) \ud s} \\ &\leq c_1 \esp{\int_{0}^{t}\frac{\mu^2(s,Y_s)}{\gamma \sigma^2(s,Y_s)} \ud s}+c_2 \esp{\int_{0}^{t}h^2(s,X^H_s,Y_s)\ud s } + c_3\esp{\int_{0}^{t} \frac{\mu(s,Y_s) h(s,X^H_s,Y_s)}{\gamma \sigma(s,Y_s)} \ud s}  < \infty,
\end{align}
thanks to conditions \eqref{novikov}, \eqref{Pcoeff} and the Cauchy-Schwarz inequality, for some positive constants $c_1, c_2, c_3$. Hence, condition \eqref{int_ammiss} is satisfied.
It remains to prove that $\esp{e^{-\gamma X^{H^*}_t-P_t}}<\infty$ for each $t \ge 0$. In view of \eqref{wealthsol}, \eqref{P} and \eqref{opti1}, we have
\begin{equation}\label{eq:equality1}
\begin{split}
& \esp{e^{-\gamma X^{H^*}_t-P_t}}  = e^{-\gamma x_{0}} \mathbb E \Big{[} e^{- \frac{1}{2}\int_{0}^t \frac{\mu^2(s,Y_s)}{\sigma^2(s,Y_s)}\ud s}  e^{ - \int_{0}^t \frac{\mu(s,Y_s)}{\sigma(s,Y_s)} \ud W^S_s} e^{ - \int_{0}^t \sqrt{1-(\rho^S)^2}h(s,X^H_s,Y_s) \ud \widetilde W_s} \\
& \quad \times e^{-\frac{1}{2}\int_0^t (1-(\rho^S)^2)h^2(s,X^{H^*}_s,Y_s)\ud s}e^{\gamma \int_{0}^t \int_{I }  (1-\bar{\Theta}_{s-}) z m(\ud s,\ud z)}e^{- \int_{0}^t\lambda(s, Y_s)\int_{I } \left(e^{\gamma (1-\overline\Theta_{s-}) z}-1\right) F(s,Y_s,\ud z)\ud s}
 \Big{]},
\end{split}
\end{equation}
where $X_{0}^H=x_0 \ge 0$ is the initial wealth and $\widetilde W=\{\widetilde W_t,\ t \ge 0\}$ is an additional Brownian motion that is independent of $W^S$.
We define the process $L=\{L_t, \ t \ge 0\}$ as
\begin{align}
L_t & = e ^{-\frac{1}{2}\int_{0}^t \frac{\mu^2(r,Y_r)}{\sigma^2(r,Y_r)}\ud r - \int_{0}^t \frac{\mu(r,Y_r)}{\sigma(r,Y_r)}  \ud W_r^S -\frac{1}{2}\int_{0}^t  (1-(\rho^S)^2)h^2(r,X^{H^*}_r,Y_r)\ud r - \int_{0}^t \sqrt{1-(\rho^S)^2}h(r,X^H_r,Y_r)\ud \widetilde W_r}.
\end{align}
Then, $L$ is a square integrable martingale thanks to condition \eqref{novikov} and Assumption \ref{h_nov}. Therefore,
\begin{align}
\esp{e^{-\gamma X^{H^*}_t-P_t}} & = e^{-\gamma x_{0}} \mathbb E \Big{[} L_t e^{\gamma \int_{0}^t \int_{I }  (1-\bar{\Theta}_{s-}) z m(\ud s,\ud z)}e^{- \int_{0}^t\lambda(s, Y_s)\int_{I } \left(e^{\gamma (1-\overline\Theta_{s-}) z}-1\right) F(s,Y_s,\ud z)\ud s}
 \Big{]}\\
&\le e^{-\gamma x_{0}} \mathbb E \left[ L_t^2\right]^{1/2} \mathbb E \Big{[} e^{\gamma \int_{0}^t \int_{I }  (1-\bar{\Theta}_{s-}) z m(\ud s,\ud z)} \Big{]}^{1/2},
\end{align}
because $\int_{0}^t\lambda(s, Y_s)\int_{I } \left(e^{\gamma (1-\bar \Theta_{s-}) z}-1\right)  F(s,Y_s,\ud z)\ud s \ge 0$ $\P$-a.s. for all $t \ge 0$. Finally, we recall that  $\mathbb E \left[ L_t^2\right]<\infty$ and note that
\begin{align}
&{\mathbb{E}}\left[  e^{2\gamma \int_{0}^t \int_I (1-\bar{\Theta}_{r-}) z m(\ud r,\ud z)} \right]\leq \esp{e^{2 \gamma \sum_{i=1}^{N_t}Z_i}}\\
&= \sum_{n \ge 0}\esp{e^{2 \gamma \sum_{i=1}^{N_t}Z_i}\Big{|}N_t=n}\P(N_t=n)= \sum_{n \ge 0}\prod_{i=1}^n\esp{e^{2 \gamma Z_i}}\P(N_t=n)<\infty,
\end{align}
thanks to the assumptions on the random variables $\{Z_n\}_{n \in \mathbb{N}}$ and the fact that the process $N$ does not explode in finite time.

\subsection{Proof of Theorem \ref{thm:backward}}\label{app:thm_backward}

We notice that the optimization is taken over the set of admissible functions $\mathcal A$, even though in the backward case one would require that $\esp{e^{-\gamma X^H_T}}<\infty$ in place of $\esp{e^{-\gamma X^H_T-P_T}}<\infty$. However, because of the assumptions on model coefficients, these two conditions are equivalent.
The proof of this result uses a guess-and-verify approach.
Suppose that the value function $V(t,x,y)$ is $\mathcal C^1$ in $t$ and $\mathcal C^2$ in $x$ and $y$, then it solves the equation
\begin{align}\max_{(\Theta^B,\Pi^B) \in [0,1] \times \R} \tilde \L^HV(t,x,y)=0, \quad (t,x,y)\in [0,T)\times \R^2,\label{eq:hjb_back}\end{align}
where $ \tilde \L^H$ is the infinitesimal generator given in \eqref{mgenxy}, with the terminal condition $V(T,x,y)=-e^{-\gamma x}$, for every $x \in \R$. We guess that the value function has the form $V(t,x,y)=-e^{-\gamma x+ \phi(t,y;T)}$, where $\phi(t,y;T)$ is the unique classical solution of the problem \eqref{eq:htilde}. Plugging this expression into \eqref{eq:hjb_back} and taking the first order condition yields \eqref{eq:htilde}. The second order conditions imply that the optimal investment strategy $\Pi^{*,B}$ is given by \eqref{eq:strategia_backward} and the optimal reinsurance strategy is given by $\Theta^{*,B}(t,y)$ as in \eqref{optr}.\\
Next, we establish a verification result. Let $v(t,x,y)$ be a solution of the equation \eqref{eq:hjb_back} with the final condition $v(T,x,y)=-e^{-\gamma x}$ (that is $v(T,x,y)=V(T,x,y)$). Then, by It\^o's formula it holds that (we omit for simplicity the dependence of $X$ on the strategy $H$)
\begin{align}
	& v(T, X_T, Y_T) = v(t,x,y)+\int_t^T\tilde \L^H v(r, X_r, Y_r) \ud r \\
& \qquad + \int_t^T \Pi_r \sigma(r,Y_r) \frac{\partial v}{\partial x}(r, X_r, Y_r) \ud W^S_r + \int_t^T \beta(r,Y_r)\frac{\partial v}{\partial y}(r, X_r, Y_r) \ud W^Y_r\\
	&  \qquad + \int_t^T \int_I v(r, X_{r-}-(1-\Theta_{r-})z, Y_r)-v(r, X_{r-}, Y_r) \left(m(\ud r, \ud z)-\lambda(r, Y_r) F(r,Y_r,\ud z) \ud r\right).
\end{align}
Since $v$ satisfies equation \eqref{eq:hjb_back}, we get that
\begin{align}
	&v(T, X_T, Y_T)\leq v(t,x,y) + \int_t^T \Pi_r \sigma(r,Y_r)\frac{\partial v}{\partial x}(r, X_r, Y_r) \ud W^S_r + \int_t^T \beta(r,Y_r)\frac{\partial v}{\partial y}(r, X_r, Y_r) \ud W^Y_r \\
	& \ + \int_t^T \! \int_I v(r, X_{r-}-(1-\Theta_{r-})z, Y_r)-v(r, X_{r-}, Y_r) \left(m(\ud r, \ud z)-\lambda(r, Y_r) F(r,Y_r,\ud z) \ud r\right).\label{eq:ineq:1}
\end{align}
If the process on the right side of \eqref{eq:ineq:1} is a martingale, taking the expectation yields
$$
V(t,x,y)\leq v(t,x,y),
$$
and the equality holds if $H$ is a maximizer of equation \eqref{eq:hjb_back}.
Then, it only remains to prove that the function $V(t,x,y)=-e^{-\gamma x + \phi(t,y;T)}$ is such that
\begin{align}
	M_t=&\int_0^t \Pi_r \sigma(r,Y_r)\frac{\partial V}{\partial x}(r, X_r, Y_r) \ud W^S_r + \int_0^t \beta(r,Y_r)\frac{\partial V}{\partial y}(r, X_r, Y_r) \ud W^Y_r \\
	&+ \int_0^t \int_I V(r, X_{r-}-(1-\Theta_{r-})z, Y_r)-V(r, X_{r-}, Y_r) \left(m(\ud r, \ud z)-\lambda(r, Y_r)  F(r,Y_r,\ud z)  \ud r\right)\\
	=& \int_0^t \Pi_r \sigma(r,Y_r) \gamma e^{-\gamma X_r} e^{\phi(r, Y_r;T)} \ud W^S_r + \int_0^t \beta(r,Y_r)\pd{\phi}{y}(r, Y_r;T) e^{-\gamma X_r} e^{\phi(r, Y_r;T)} \ud W^Y_r \\
	&- \int_0^t \int_I e^{-\gamma X_r} e^{\phi(r, Y_r;T)}(e^{\gamma(1-\Theta_{r-})z}-1) \left(m(\ud r, \ud z)-\lambda(r, Y_r)  F(r,Y_r,\ud  z) \ud r\right)
\end{align}
is a martingale. To this aim, we consider the localizing sequence of random times
$$
\widetilde \tau_{n}:= \inf \Big{\{} s \in [t,T]: \ |\phi(t,Y_t;T)| > n ,\ \left|\pd{\phi}{y}(t,Y_t;T)\right| > n,\ X_t<-n  \Big{\}}, \quad n\in \mathbb N.
$$
Then, $(\widetilde \tau_n)_{ n\in \mathbb N}$ is an increasing sequence, $\lim_{n \to \infty}\tau_n \wedge T =T$ and computations similar to those in the proof of Theorem \ref{familyfeu} show that
\begin{align}
	&\mathbb{E}\bigg[\int_{0}^{T\wedge\widetilde\tau_n} \bigg{\{} \left( \Pi_r \sigma(r,Y_r) \frac{\partial V}{\partial x}(r, X_r, Y_r)\right)^2+ \left(\beta(r,Y_r)\frac{\partial V}{\partial y}(r, X_r, Y_r)\right)^2\\
	&\qquad+\int_I \left|V(r, X_{r-}-(1-\Theta_{r-})z, Y_r)-V(r, X_{r-}, Y_r) \right| \lambda(r, Y_r)  F(r,Y_r,\ud z) \bigg{\}} \ud r \bigg]<\infty.
\end{align}
This concludes the proof.

\end{document}